\def\input@path{{styles/}}
    \def\UseBibLatex{1}
    \newcommand{\notSOSA}[1]{#1}
    \newcommand{\SOSAVer}[1]{}
    \newcommand{\notSOSA}[1]{}
    \newcommand{\SOSAVer}[1]{#1}
\providecommand{\BibLatexMode}[1]{}
\providecommand{\BibTexMode}[1]{}
\renewcommand{\BibLatexMode}[1]{#1}
\renewcommand{\BibTexMode}[1]{}
  \renewcommand{\BibLatexMode}[1]{}
  \renewcommand{\BibTexMode}[1]{#1}
\titleformat{\subsubsection}[runin]
  {\normalfont\normalsize\bfseries} %
  {\thesubsubsection.}               %
  {0.5em}                             %
  {}                                %
  [.]
\theoremstyle{plain}%
\newtheorem{theorem}{Theorem}[section]
\newtheorem{lemma}[theorem]{Lemma}
\newtheorem{claim}[theorem]{Claim}%
\newtheorem{observation}[theorem]{Observation}
\theoremstyle{plain}%
\newtheorem*{remark:unnumbered}[theorem]{Remark}%
\newtheorem{remark}[theorem]{Remark}%
\newtheorem{definition}[theorem]{Definition}
\theoremstyle{nonumberplain}%
\newtheorem{proof}{Proof:}%
\providecommand{\emphind}[1]{}%
\renewcommand{\emphind}[1]{\emph{#1}\index{#1}}
\definecolor{blue25emph}{rgb}{0, 0, 11}
\providecommand{\emphic}[2]{}
\renewcommand{\emphic}[2]{\textcolor{blue25emph}{%
      \textbf{\emph{#1}}}\index{#2}}
\providecommand{\emphi}[1]{}%
\renewcommand{\emphi}[1]{\emphic{#1}{#1}}
\definecolor{almostblack}{rgb}{0, 0, 0.3}
\providecommand{\emphw}[1]{}%
\renewcommand{\emphw}[1]{{\textcolor{almostblack}{\emph{#1}}}}%
\providecommand{\emphOnly}[1]{}%
\renewcommand{\emphOnly}[1]{\emph{\textcolor{blue25emph}{\textbf{#1}}}}
\newcommand{\myqedsymbol}{\rule{2mm}{2mm}} \newcommand{\EliotThanks}[1]{%
   \thanks{%
      School of Computing and Data Science; %
      University of Illinois; %
      201 N. Goodwin Avenue; %
      Urbana, IL, 61801, USA; %
      {\tt erobson2\atgen{}illinois.edu}; {\tt \url{https://eliotwrobson.github.io/}.} #1}} \newcommand{\SarielThanks}[1]{%
   \thanks{%
      School of Computing and Data Science; %
      University of Illinois; %
      201 N. Goodwin Avenue; %
      Urbana, IL, 61801, USA; %
      \href{mailto:spam@illinois.edu}{sariel@illinois.edu}; %
      \url{http://sarielhp.org/}.%
      #1%
   }%
} \newcommand{\BenThanks}[1]{%
   \thanks{Department of Computer Science; University of Texas at Dallas; Richardson, TX 75080, USA; {\tt benjamin.raichel\atgen{}utdallas.edu}; {\tt \url{http://utdallas.edu/\string~benjamin.raichel}.} #1}} \newcommand{\atgen}{\symbol{'100}}
\newcommand{\HLink}[2]{\hyperref[#2]{#1~\ref*{#2}}}
\newcommand{\HLinkSuffix}[3]{\hyperref[#2]{#1\ref*{#2}{#3}}}
\newcommand{\figlab}[1]{\label{fig:#1}}
\newcommand{\figref}[1]{\HLink{Figure}{fig:#1}}
\newcommand{\thmlab}[1]{{\label{theo:#1}}}
\newcommand{\thmref}[1]{\HLink{Theorem}{theo:#1}}
\newcommand{\obslab}[1]{\label{observation:#1}}
\newcommand{\obsref}[1]{\HLink{Observation}{observation:#1}}
\newcommand{\remlab}[1]{\label{rem:#1}}
\newcommand{\remref}[1]{\HLink{Remark}{rem:#1}}%
\newcommand{\lemlab}[1]{\label{lemma:#1}}
\newcommand{\lemref}[1]{\HLink{Lemma}{lemma:#1}}%
\providecommand{\deflab}[1]{\label{def:#1}}
\newcommand{\defref}[1]{\HLink{Definition}{def:#1}}
\newcommand{\defrefregY}[2]{\hyperref[def:#1]{{\textcolor{yellow}{#2}}}}
\definecolor{blackish}{rgb}{0.14, 0, 0.0}
\newcommand{\defrefY}[2]{%
   \textcolor{blackish}{%
      \renewcommand\color[2][]{}%
      \defrefregY{#1}{#2}%
   }%
}
\providecommand{\eqlab}[1]{}%
\renewcommand{\eqlab}[1]{\label{equation:#1}}
\newcommand{\Eqref}[1]{\HLinkSuffix{Eq.~(}{equation:#1}{)}}
\providecommand{\remove}[1]{}%
\newcommand{\SetY}[2]{\left\{ #1 \;\middle\vert\; #2 \right\}}
\newcommand{\pth}[1]{\mleft(#1\mright)}%
\newcommand{\Vertices}{\Mh{\mathsf{V}}}%
\newcommand{\VV}{\Vertices}%
\newcommand{\ceil}[1]{\mleft\lceil {#1} \mright\rceil}
\newcommand{\brc}[1]{\left\{ {#1} \right\}}
\newcommand{\cardin}[1]{\left\lvert {#1} \right\rvert}%
\renewcommand{\th}{th\xspace}
\newcommand{\ds}{\displaystyle}%
\renewcommand{\Re}{\mathbb{R}}%
\newlist{compactenumA}{enumerate}{5}%
\setlist[compactenumA]{topsep=0pt,itemsep=-1ex,partopsep=1ex,parsep=1ex,%
   label=(\Alph*)}%
\newlist{compactenuma}{enumerate}{5}%
\setlist[compactenuma]{topsep=0pt,itemsep=-1ex,partopsep=1ex,parsep=1ex,%
   label=(\alph*)}%
\newlist{compactenumI}{enumerate}{5}%
\setlist[compactenumI]{topsep=0pt,itemsep=-1ex,partopsep=1ex,parsep=1ex,%
   label=(\Roman*)}%
\newlist{compactenumi}{enumerate}{5}%
\setlist[compactenumi]{topsep=0pt,itemsep=-1ex,partopsep=1ex,parsep=1ex,%
   label=(\roman*)}%
\newlist{compactitem}{itemize}{5}%
\setlist[compactitem]{topsep=0pt,itemsep=-1ex,partopsep=1ex,parsep=1ex,%
   label=\ensuremath{\bullet}}%
\numberwithin{figure}{section}%
\numberwithin{table}{section}%
\numberwithin{equation}{section}%
\providecommand{\IntRange}[1]{\mleft\llbracket #1 \mright\rrbracket}
\newcommand{\IRX}[1]{\IntRange{#1}}%
\newcommand{\Term}[1]{\textsf{#1}}%
\newcommand{\LSH}{\Term{LSH}\xspace}%
\newcommand{\NN}{\Term{NN}\xspace}%
\newcommand{\ANN}{\Term{ANN}\xspace}%
\newcommand{\DiskANN}{\Term{DiskANN}\xspace}%
\newcommand{\WSPD}{\Term{WSPD}\xspace}
\newcommand{\repX}[1]{\zeta^{}_{#1}}%
\newcommand{\SaveContent}[2]{%
   \expandafter\newcommand{#1}{#2}%
}
\providecommand{\Mh}[1]{{#1}}%
\newcommand{\dY}[2]{\left\| #1 - #2 \right\|}
\newcommand{\eps}{\varepsilon}
\newcommand{\FMS}{\EuScript{X}\index{metric space}}
\newcommand{\dC}{\mathcalb{d}}%
\newcommand{\DistChar}{\dC}%
\newcommand{\dmY}[2]{\DistChar\pth{#1,#2}}%
\newcommand{\ts}{\hspace{0.6pt}}
\newcommand{\Ground}{{U}}
\DeclareMathAlphabet{\mathpzc}{OT1}{pzc}{m}{it}
\newcommand{\WR}{\mathcal{W}}
\newcommand{\PB}{{Q}}%
\newcommand{\PC}{{R}}%
\newcommand{\diamC}{\nabla}%
\newcommand{\diamX}[1]{\diamC\pth{#1}}
\newcommand{\diamY}[2]{\diamC^{}_{\!#1}\pth{#2}}
\newcommand{\G}{\mathsf{G}}%
\renewcommand{\H}{\mathsf{H}}%
\providecommand{\Edges}{\Mh{\mathsf{E}}}%
\providecommand{\EE}{\Edges}%
\providecommand{\EdgesX}[1]{\Edges\pth{#1}}%
\providecommand{\EGX}[1]{\EdgesX{#1}}%
\newcommand{\dsY}[2]{\dmY{#1}{#2}}%
\newcommand{\seclab}[1]{\label{sec:#1}}
\newcommand{\secref}[1]{\HLink{Section}{sec:#1}}
\providecommand{\P}{\mathsf{P}}%
\renewcommand{\P}{\mathsf{P}}%
\newcommand{\nnY}[2]{\mathsf{nn}^{}_{#1}\pth{#2}}
\newcommand{\Packing}{\mathcal{N}}%
\newcommand{\clmlab}[1]{\label{claim:#1}}
\newcommand{\clmref}[1]{\HLink{Claim}{claim:#1}}
\providecommand{\etal}{et~al.\xspace}
\renewcommand{\etal}{et~al.\xspace}
\newcommand{\pair}{\mathcalb{p}}%
\newcommand{\spread}{\Psi}%
\newcommand{\spreadX}[1]{\spread\pth{#1}}%
\newcommand{\cpX}[1]{\mathrm{cp}\pth{#1}}
\newcommand{\dirEdgeY}[2]{#1 \rightarrow #2}%
\newcommand{\ballC}{\mathcalb{b}}%
\newcommand{\ballY}[2]{\ballC\pth{#1,#2}}%
\newcommand{\permut}[1]{\left\langle {#1} \right\rangle}
\newcommand{\tbllab}[1]{\label{table:#1}}
\newcommand{\tblref}[1]{\HLink{Table}{table:#1}}
\newcommand{\Eps}{\tfrac{1}{\eps}}%
\newcommand{\hide}[1]{}
\title{The Road to the Closest Point is Paved by Good Neighbors}
\author{%
   Sariel Har-Peled%
   \SarielThanks{%
   The work on this paper was partially supported by NSF AF award CCF-2317241.%
   }%
   \and%
   Benjamin Raichel%
   \BenThanks%
   {%
   The work on this paper was partially supported by NSF CCF award 2311179.}%
   \and%
   Eliot W. Robson%
   \EliotThanks{}%
}%
\date{\today}
\begin{document}

\maketitle

\SOSAVer{%
   \fancyfoot[R]{\scriptsize{Copyright \textcopyright\ 2026 by SIAM\\
         Unauthorized reproduction of this article is prohibited.}
         }
}%

\begin{abstract}
    Given a set $\P$ of $n$ points in $\Re^d$, and a parameter $\eps \in (0,1)$, we present a new construction of a directed graph $\G$, of size $O(n/\eps^d)$, such that $(1+\eps)$-\ANN queries can be answered by performing a greedy walk on $\G$, repeatedly moving to a neighbor that is (significantly) better than the current point. To the best of our knowledge, this is the first construction of a linear size with no dependency on the spread of the point set. The resulting query time, is $O( \eps^{-d} \log \spread)$, where $\spread$ is the spread of $\P$. The new construction is surprisingly simple and should be practical.
\end{abstract}

\section{Introduction}

A problem commonly encountered is \emph{nearest neighbor search} (aka \emph{proximity search}) -- given a finite set $\P$, endowed with a metric $\dC$, preprocess $\P$ such that given a query point $q$ one can quickly compute its nearest neighbor $\nnY{q}{\P} = \arg \min_{ p \in \P}\dmY{q}{p}$ in $\P$. This problem was studied extensively in the last 60 years. In high-dimensional Euclidean space, the \emph{exact} problem can not be solved faster than the time it takes to scan the input \cite{him-anntr-12}. Even in moderate dimensions (say four), exact data structures seem hopeless.
Thus, people turned to approximation, where one can return a point sufficiently close to the answer.

For low/moderate dimensions, data structures based on $kd$-trees perform well for the \ANN (i.e., \emph{approximate nearest-neighbor}) problem, both in theory and practice \cite{amnsw-oaann-98}. The problem is significantly more challenging in higher dimensions, and even getting a data structure with sublinear query time is not easy.
Locality-sensitive hashing (\LSH) introduced by Indyk and Motwani \cite{im-anntr-98, him-anntr-12} offered a data structure that performs well in theory and practice.

\paragraph{\NN graph.}
Another natural approach is constructing a graph on the points of $\P$. Then perform an $A^*$-type search for the nearest neighbor, walking on the graph towards the closest point to the query.
Arya and Mount \cite{am-annqf-93} and Clarkson \cite{c-aacpq-94} both offered results along this direction, see \tblref{results} for details. This research direction was abandoned in theory because of better theoretical results \cite{amnsw-oaann-98}, but empirical work using this technique continued.

A desired property of these graphs is that \emphw{greedy routing} suffices -- that is, one starts with an arbitrary vertex, and performs a walk always moving to the neighbor of the current vertex closest to the query point (a discrete analogue of gradient descent), till convergence, and this yields the desired \ANN.

\paragraph*{Navigable graphs.}
For $\alpha >1$, a graph is \emphw{$\alpha$-navigable} if for any pair $s,t$, either $\dirEdgeY{s}{t} \in \EGX{\G}$, or there exist $\dirEdgeY{s}{y} \in \EGX{\G}$, such that $\dmY{y}{t} < \tfrac{1}{\alpha} \dY{s}{t}$. Namely, a neighbor of $s$ is ``significantly'' closer to the destination $t$.
Indyk and Xu \cite{ix-wcppa-23} showed that greedy routing on $\alpha$-navigable graph answers $\gamma$-\ANN queries, where $\gamma \approx \tfrac{\alpha+1}{\alpha-1}$. This ratio was improved to $1 + \tfrac{1}{\alpha-1}$ by Gollapudi \etal \cite{gksw-sbypi-25}.

\paragraph*{\DiskANN.}
Recently, Subramanya \etal~\cite{sdhkk-dfabp-19} (\DiskANN) presented results that seem to outperform existing techniques in practice. \DiskANN presents a rather interesting, but challenging to analyze, construction of the \NN-graph.
It starts with a random graph over the points.
In the cleaning stage, the algorithm randomly permutes the graph's vertices.
For each vertex in the permutation, it computes its $k$ nearest-neighbors, performing an $A^*$-type search in the existing graph. The algorithm then adds edges from these $k$-nearest-neighbors to the query vertex, repeating this for all vertices in the permutation. The algorithm repeats this cleaning process twice.

During this process, the algorithm prunes edges whenever a vertex $v$ outdegree exceeds a certain threshold $R$. The pruning for a vertex $v$, and its outgoing neighbors $\Gamma \subseteq \P$ in the graph, is done as follows. The algorithm repeatedly marks the closest point $u$ in $\Gamma$ to $v$. It then throws away all the points in $N$ that are ``sufficiently'' close to $u$ (including $u$ itself).
It repeats this process till $\Gamma$ is exhausted, or $R$ points are marked.
The algorithm then deletes all the outgoing unmarked edges from $v$ (i.e., only the marked neighbors survive).

Somewhat related approaches used in practice include \Term{H{NS}W} \cite{my-erann-20}, and \Term{N{S}G} \cite{fxwc-fanns-19}.  Indyk and Xu \cite{ix-wcppa-23} studied a variant of \DiskANN and provided theoretical analysis for its performance -- showing that it works well if the data is low-dimensional and of bounded spread. This slow preprocessing version performs a cleanup for each vertex in the graph separately, starting with the whole point set. They also showed a matching lower-bound showing that in the worst case, \DiskANN (with ``fast preprocessing'') needs linear query time.

\paragraph*{To spread or not to spread?}

Traditionally, there is a dislike for theoretical results that depend on the spread of the input. As a reminder, for a set $\P$ in a metric space, its spread $\spread = \diamX{\P} / \cpX{\P}$ -- that is, the ratio between the longest distance and the smallest distance between any two points of $\P$. Generally speaking, any dependency of the form $\log \spread$ in results can be replaced, usually after tedious and involved work, with $\log n$, where $n = \cardin{\P}$ \cite{hpm-fcnldm-05}. In practice, even in moderate dimensions, frequently the spread $\spread$ is small. Thus, logarithmic dependency on the spread is quite acceptable, and in some cases, even preferable \cite{ix-wcppa-23} to logarithmic dependency on $n$.

\paragraph*{\WSPD.}
In 1995, Callahan and Kosaraju \cite{ck-dmpsa-95} show that the Euclidean metric, for a set $\P$ of $n$ points in $\Re^d$, can be compactly described as the union of $O(n/\eps^d)$ bicliques, where all the distances of edges in a single biclique are the same up to a factor of $1\pm \eps$. Elegantly, the biclique cover is computed in $O(n \log n + n/\eps^d)$ time, with each biclique being represented as a pair of nodes in a constructed tree over the point set, see \secref{wspd_def} for details.

\paragraph*{Greedy permutation.}

Given a set $\P$ of $n$ points in a metric space, a natural way of ordering the points is provided by starting with an arbitrary point of $\P$, and then repeatedly picking the furthest point in $\P$ from the set of points picked so far. The resulting ordering of $\P = \permut{p_1, \ldots, p_n}$ is known as the \emphw{greedy permutation} \cite{h-gaa-11}, see \secref{greedy_permutation} for details. The greedy permutation can be approximated in near-linear time if the dimension of the metric space is low. It has the desired property that for any $k \in \{1,\ldots, n\}$, the prefix $\P_k = \{ p_1, \ldots, p_k\}$ is a $2$-approximation to the optimal $k$-center clustering of $\P$ \cite{g-cmmid-85}.

\begin{table}[t]
    \centering%
    \begin{tabular}{|c|c|l|l|}
      \hline
      Space
      &
        Query time
      &
        Ref
      &
        Remark
      \\
      \hline
      \hline
      $O\pth{ \frac{n}{\eps^{d-1}} \log n}$
      &
        $O( \frac{1}{\eps^{d-1}}\log^3 n )\Bigr.$
      &
        \cite{am-annqf-93}
      &
        Yao graph + skip-list.
      \\
      \hline
      $O\pth{ \frac{n}{\eps^{(d-1)/2}} \log \spread}\Bigr.$
      &
        $O( \frac{1}{\eps^{(d-1)/2}} \log \spread \cdot \log n)$
      &
        \cite{c-aacpq-94}
      &
        Opt approx Voronoi cells + skip-list
      \\
      \hline
      \hline
      $O \pth{ \frac{n}{\eps^d}  \log \spread }$
      &
        $O \bigl({ \frac{1}{\eps^d \log (1/\eps)}  \log^2 \spread }\bigr)\Bigr.$
      &
        \cite{ix-wcppa-23}
      &
        Analyzing \DiskANN \cite{sdhkk-dfabp-19}.
      \\
      \hline
      \hline
      $O\pth{\frac{n}{\eps^d} \log \spread }\Bigr.$
      &
        $O\bigl({\frac{1}{\eps^d  \log (1/\eps)} \log^2 \spread }\bigr)\Bigr.$
      &
        \lemref{basic_wspd_search}%
      & \WSPD based.
      \\
      \hline
      $O\pth{\frac{n}{\eps^d} \log \spread }\Bigr.$
      &
        $O\bigl({\frac{1}{\eps^d  } \log \spread + \log^2 \spread  }\bigr)\Bigr.$
      &
        \thmref{main_wspd}%
      & Uses two graphs.
      \\
      \hline
      $O\pth{\frac{n}{\eps^d} \log \spread }\Bigr.$
      &
        $O\bigl({\frac{1}{\eps^d  } \log \frac{1}{\eps} + \log \spread  }\bigr)\Bigr.$
      &
        \lemref{multi:graphs}%
      & Uses multiple graphs.
      \\
      \hline
      \hline
      $O\pth{\frac{n}{\eps^d}}\Bigr.$
      &
        $O\pth{\frac{1}{\eps^d}\log \spread }$
        &
          \thmref{a_n_n_main}%
      &
        Greedy permutation.
      \\
      \hline
    \end{tabular}
    \caption{Known results on \ANN via walks in a graph. The input is a set of $n$ points in $\Re^d$, and $\eps \in (0,1)$ is a parameter. The result returned in $(1+\eps)$-\ANN. All new results also hold for spaces with bounded doubling dimension.}
    \tbllab{results}
\end{table}

\subsection*{Our results}

We develop better guaranteed constructions of navigable graphs in low dimensions. Specifically, the input is a set $\P$ of $n$ points in $\Re^d$, and a parameter $\eps \in (0,1)$. (Our results also hold verbatim when $O(d)$ is the doubling dimension of the metric space hosting $\P$.) We show the following:

\begin{compactenumI}
    \smallskip%
    \item \textsf{\NN graph using \WSPD.}
    Inspired by the analysis of Indyk and Xu \cite{ix-wcppa-23}, we show how to construct a graph that can be used to answer $(1+\eps)$-\ANN by performing a greedy walk. Our construction uses \WSPD, and it intuitively provides a direct construction of a graph similar to the one built by \DiskANN for the settings analyzed by Indyk and Xu. The construction can be interpreted as providing an alternative explanation for the graph constructed by \DiskANN (when using ``slow-preprocessing''). The resulting graph size depends logarithmically on the spread of the input, see \tblref{results} for details.

    \smallskip%
    \item \textsf{\NN graph using greedy permutation.} %
    We provide a new construction for navigable graphs that uses the greedy permutation -- it connects $O(1/\eps^d)$ edges into a point in the permutation from previous points. Thus, the resulting graph has a size that is linear and independent of the spread of the point set. To our knowledge, this is the first construction to have this property. In addition, it does not use any Euclidean space properties, and applies to doubling spaces, unlike the constructions of Arya and Mount and the one by Clarkson. The query time is $O( \tfrac{1}{\eps^d} \log \spread)$, see
    \thmref{a_n_n_main} for details.
\end{compactenumI}

\paragraph{Paper organization.}
We provide some necessary background in \secref{prelim}.  \secref{n_n_graphs} describes some key components of \DiskANN.  \secref{wspd_def} describes \WSPD in detail.  \secref{n_n_wspd} describes the construction of navigable graph using \WSPD. \secref{n_n_greedy} describes the new construction using greedy permutation.

\section{Preliminaries}
\seclab{prelim}

\subsection{Metric spaces}
\seclab{metric}

\begin{definition}
    \deflab{metric_space_def}%
    A \emphi{metric space} $\FMS$ is a pair $\FMS = (\Ground, \DistChar )$, where $\Ground$ is the ground set, and $\DistChar: \Ground \times \Ground \rightarrow [0, \infty)$ is a \emphi{metric} satisfying the conditions: (i) $\dmY{x}{y} = 0$ if and only if $x =y$, (ii) $\dmY{x}{y} = \dmY{y}{x}$, and (iii) $\dmY{x}{y} + \dmY{y}{z} \geq \dmY{x}{z}$ (triangle inequality).
\end{definition}

\begin{definition}
    \deflab{spread}%
    For a set $\P \subseteq \Ground$, its \emphi{diameter} is $\diamY{\dC}{\P} = \max_{x,y \in \P} \dmY{x}{y}$. Its \emphi{closest pair distance} is $\cpX{\P} = \min_{x,y \in \P: x \neq y} \dmY{x}{y}$. The ratio between these two quantities is the \emphi{spread}: $\spreadX{\P} = \diamX{\P} / \cpX{\P}$.
\end{definition}

\begin{definition}
    For a point $q \in \Ground$, and a set $\P \subseteq \Ground$, the \emphi{nearest-neighbor} of $q$ in $\P$, is the point
    \begin{math}
        \nnY{\P}{q} = \arg \min_{p \in \P} \dmY{q}{p}.
    \end{math}
    The distance between $q$ and $\nnY{\P}{q}$ is denoted by $\dmY{q}{\P} = \min_{p \in \P} \dmY{q}{p}$.
\end{definition}

\begin{definition}
    \deflab{packing}%
    Consider a metric space $(\Ground, \DistChar)$, and a set $\P \subseteq \Ground$.  A set $\Packing \subseteq \P$ is an \emphi{$r$-packing} for $\P$ if the following hold:
    \begin{compactenumi}
        \smallskip%
        \item \emphw{Covering property}: All the points of $\P$ are within a distance $< r$ from the points of $\Packing$. Formally, for all $p \in \P$, $\dmY{p}{\Packing} <r$.

        \smallskip%
        \item \emphw{Separation property}: For any pair of points $x, y \in \Packing$, we have that $\dmY{x}{y} \geq r$.
    \end{compactenumi}
\end{definition}
The naive algorithm for computing a packing repeatedly marks any point in $\P$ at a distance $\geq r$ from the current marked points till no such point exists. The set of marked points forms an $r$-packing. Faster algorithms are known in some cases \cite{hr-nplta-15,ehs-agcds-20}.

\begin{definition}
    For a point $x \in \Ground$, and a radius $r \geq 0$, the \emphi{ball} of radius $r$ centered at $x$ is the set
    \begin{math}
        \ballY{x}{r} = \SetY{z \in \Ground}{\dmY{x}{z} \leq r}.
    \end{math}
\end{definition}

\begin{definition}
    For $\eps \in (0,1)$, and a query point $q \in \Ground$, a point $p$ is \emphi{$(1+\eps)$-\ANN} (approximate nearest-neighbor) for $q$ if $\dmY{q}{p} \leq (1+\eps) \dmY{q}{\P}$.
\end{definition}

\newcommand{\dblConst}{\lambda}
\newcommand{\dblDim}{\delta}

\paragraph*{Doubling metrics.}

Consider a finite metric space $\FMS = (\Ground, \DistChar )$, The \emphi{doubling constant} $\dblConst$ of a set $\Ground$, is the minimum integer $\dblConst > 0$, such that for every ball $\ballC$ of $\FMS$, can be covered by at most $\dblConst$ balls of at most half the radius. The \emphi{doubling dimension} of the metric space, denoted by $\dblDim$, is $\ceil{ \log_2 \dblConst }$. it is not hard to verify that $\Re^d$ has doubling constant $2^{O(d)}$, and thus doubling dimension $O(d)$. Doubling dimension is an abstraction of the standard Euclidean dimension. In many cases, real data has a much lower doubling dimension than the high-dimensional ambient space it lies in. Many algorithms, for low-dimensional Euclidean input, extend to spaces with low doubling dimension \cite{hm-fcnld-06}.

\subsection{Background on graph-based search for \ANN}
\seclab{n_n_graphs}

\subsubsection{Search procedure}
Consider a directed graph $\G = (\P, \EE)$ built over a set $\P$ of $n$ points in some metric space. The task is to compute the \ANN (or $k$ closest such points) for a given query point $q$. The algorithm performs a Dijkstra-like exploration of the graph -- specifically, one initializes the queue to contain some arbitrary start vertex $s$. Now, in each iteration, one extracts the minimum distance point in the queue from $q$, and adds all its outgoing neighbors, seen for the first time, to the queue. If the queue size exceeds a certain threshold $L$, one removes all the points from the queue except the $L$ closest to $q$, where $L$ is some prespecified parameter. As in Dijkstra, the algorithm avoids visiting the same node more than once. Once the queue is empty, the search is completed.  The procedure returns the $k$ closest vertices visited by the algorithm during this process, sorted by their distance from $q$.

\subsubsection{Greedy routing}
\seclab{greedy_routing}

A more straightforward search procedure performs a walk in the graph starting from a vertex. It repeatedly moves to a neighbor closer to the query point, till reaching a (usually approximate) local minimum. There are two natural variants:
\begin{compactenumA}
    \smallskip%
    \item The ``impulsive'' version moves as soon as a neighbor, significantly closer to the query, is encountered.

    \smallskip%
    \item The ``mature'' alternative moves to the best neighbor attached to the current point. The above search algorithm achieves this behavior if one sets $L=1$.
\end{compactenumA}

\subsubsection{Robust prune (\DiskANN)}
A key component is pruning the outgoing edges from vertices with high outdegree. So consider a vertex $v$ and its list of outgoing neighbors $N_v$. The idea is to prune away neighbors that are too close together. To this end, one sorts the points of $N_v$ by increasing distance from $v$, and let $N = \permut{p_1, \ldots, p_m}$ be the resulting ordered list. The algorithm repeatedly takes the first point $p$ from $N$, adds it to the output list $O_v$ (initially empty), and removes all the points of $N$ that are inside the ball
\begin{equation*}
    B_{\dirEdgeY{v}{p}} = \SetY{ f \in N}{ \alpha \dmY{p}{f} <  \dmY{v}{f}},
\end{equation*}
where $\alpha >1$ is some parameter (e.g., $\alpha=2$). In words, the set $B_{\dirEdgeY{v}{p}}$ contains all the points of $N$ that are $\alpha$-times closer to $p$ than to $v$. Intuitively, $p$ serves as a local distribution \emphw{center} for $v$ for all the points in $B_{\dirEdgeY{v}{p}}$.  In Euclidean space the loci of all points that are $\alpha$-times closer to $p$ than $v$ is an \emphw{Apollonius ball}, with center at
\begin{equation*}
    p +
    \frac{1}{\alpha^2 - 1}  (p-v),
    \qquad\text{and of radius}\qquad
    r
    =
    \frac{\alpha}{\alpha^2 - 1} \dY{v}{p},
\end{equation*}
see \lemref{Apollonius}. The algorithm removes $B_{\dirEdgeY{v}{p}}$ from $N$, and repeats the process till $N$ is exhausted. One then sets the outgoing edges from $v$ to the (hopefully) reduced list of centers selected -- that is, the edges added to $v$ are $\SetY{\dirEdgeY{v}{u}}{u \in O_v}$.  \figref{prune} shows the example of the output of this process.

\begin{figure}
    \includegraphics[page=2,width=0.47\linewidth]{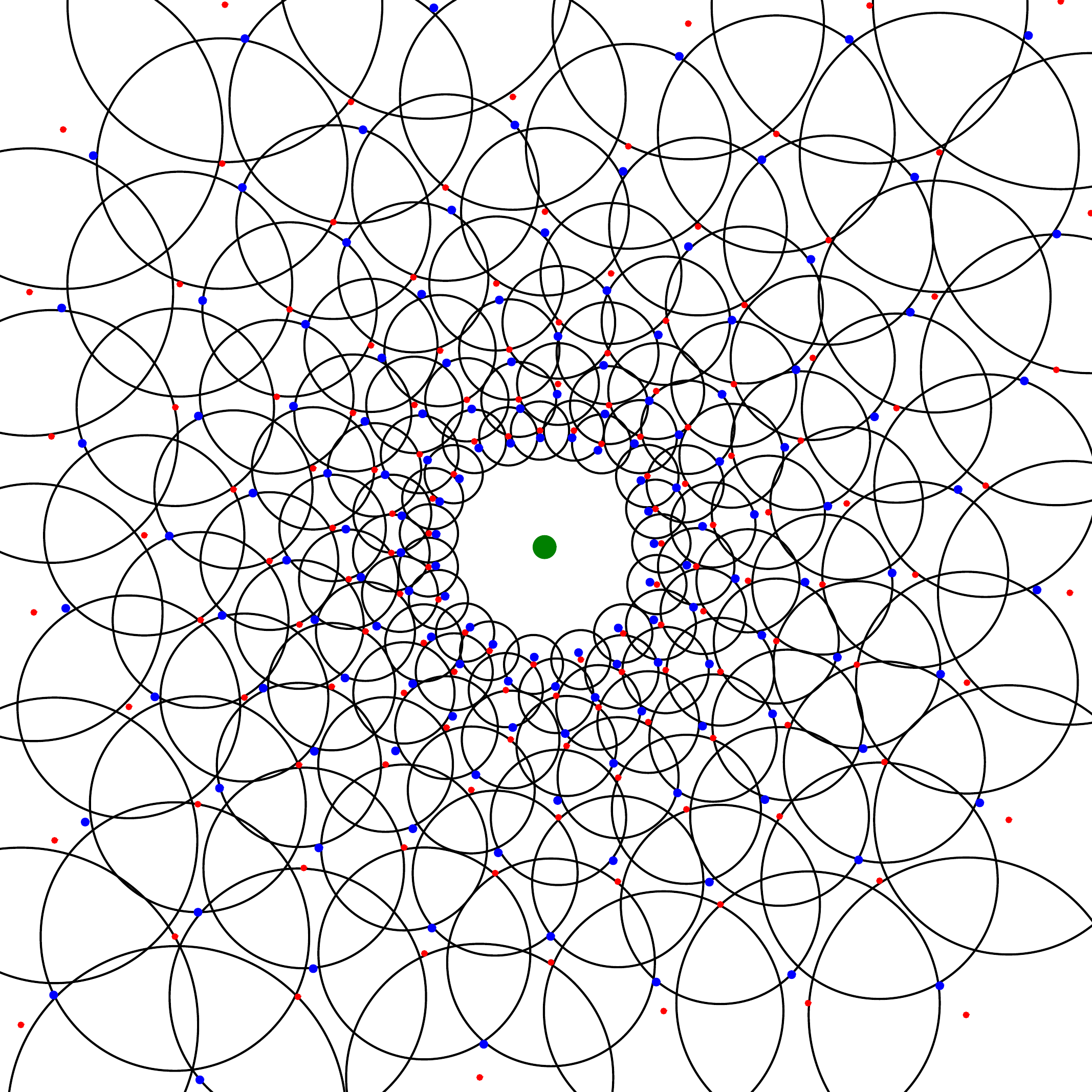}
    \hfill%
    \includegraphics[page=1,width=0.47\linewidth]{figs/prune}
    \caption{Left: The points selected by robust prune, with $\alpha=4$, where the original set of $\approx 200,000$ points is uniformly distributed in the square, except for a disallowed ``island'' in the middle. Right: The Apollonius disks that were used during this process. (We have not shown the original point set, as it simply forms a solid blob, and that seemed pointless [or is it point{}full?].)}
    \figlab{prune}
\end{figure}

Indyk and Xu \cite{ix-wcppa-23} showed that starting with $N_v = \P$, and performing this pruning for all the vertices of $\G$, the resulting graph answers $\gamma$-\ANN queries using greedy routing, where $\gamma \approx \frac{\alpha+1}{\alpha-1}$. (The version using $N_v = \P$ is the ``slow-preprocessing'' variant of \DiskANN.)
\begin{observation}
    \obslab{separated}%
    We are interested in the $(1+\eps)$-\ANN regime. That is $\gamma = 1+\eps$, for some $\eps \in (0,1)$. We thus have that $\alpha = 2/\eps +1$ in this case. The algorithm does pruning for the edge $\dirEdgeY{v}{p}$, and we get that the Apollonius ball in this case has its center close to $p$, and it has radius $\approx (\eps/2) \dY{v}{p}$. More precisely, the center is at
    \begin{math}
        p + \frac{\eps^2}{4(1+\eps)} (p-v)
    \end{math}
    and the radius is
    \begin{math}
        r
        =
        \pth{1+\frac{\eps}{2}}\frac{\eps}{2(1+\eps)} \dY{v}{p}.
    \end{math}
\end{observation}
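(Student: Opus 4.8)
The plan is to treat this observation as a one-line reduction followed by bookkeeping, so there is essentially nothing to ``prove'' beyond substituting and simplifying. First I would invoke the greedy-routing guarantee quoted just above — routing on an $\alpha$-navigable graph (equivalently, on the robust-prune graph with parameter $\alpha$) answers $\gamma$-\ANN queries with $\gamma \approx \frac{\alpha+1}{\alpha-1}$ — and solve for $\alpha$ in the regime $\gamma = 1+\eps$. Clearing denominators in $(1+\eps)(\alpha-1) = \alpha+1$ gives $\eps\alpha = 2+\eps$, hence $\alpha = 2/\eps + 1 = \frac{2+\eps}{\eps}$, which is the asserted value. (If one wants to be careful about the ``$\approx$'' in the guarantee, one takes $\alpha$ equal to this value or marginally larger; that changes nothing below beyond hidden constants.)

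Next I would plug $\alpha = \frac{2+\eps}{\eps}$ into the Apollonius-ball description from \lemref{Apollonius}: the locus of points that are $\alpha$ times closer to $p$ than to $v$ is the ball with center $p + \frac{1}{\alpha^2-1}(p-v)$ and radius $r = \frac{\alpha}{\alpha^2-1}\dY{v}{p}$. The key simplification is to factor $\alpha^2 - 1 = (\alpha-1)(\alpha+1)$: since $\alpha - 1 = 2/\eps$ and $\alpha + 1 = \frac{2(1+\eps)}{\eps}$, we get $\alpha^2 - 1 = \frac{4(1+\eps)}{\eps^2}$. Therefore $\frac{1}{\alpha^2-1} = \frac{\eps^2}{4(1+\eps)}$, which is exactly the center coefficient claimed, and $\frac{\alpha}{\alpha^2-1} = \frac{\eps(2+\eps)}{4(1+\eps)} = \pth{1+\frac{\eps}{2}}\frac{\eps}{2(1+\eps)}$, which is the claimed radius.

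Finally I would justify the two qualitative phrases. Since $\eps \in (0,1)$, the center is displaced from $p$ by $\frac{\eps^2}{4(1+\eps)}\dY{v}{p} < \tfrac{1}{8}\dY{v}{p}$ — in fact this is $O(\eps^2)\dY{v}{p}$, which is negligible next to $r$ — so the center is ``close to $p$''; and $\frac{2+\eps}{2(1+\eps)} \in \pth{\tfrac{3}{4}, 1}$ on $(0,1)$, so $r = \Theta(\eps)\dY{v}{p}$ with $r \to \tfrac{\eps}{2}\dY{v}{p}$ as $\eps \to 0$, justifying ``radius $\approx (\eps/2)\dY{v}{p}$''. There is no genuine obstacle here; the only point requiring a moment's thought is whether the quoted bound $\gamma = \frac{\alpha+1}{\alpha-1}$ is an exact identity or carries lower-order slack, and either way the stated $\alpha$ (possibly inflated by a $1 + O(\eps)$ factor) is the right choice and the remainder is arithmetic.
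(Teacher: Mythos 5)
Your proposal is correct and follows exactly the route the paper intends: the observation is just the substitution of $\alpha = 2/\eps+1$ (obtained by solving $\gamma = \frac{\alpha+1}{\alpha-1} = 1+\eps$) into the Apollonius-ball formula of \lemref{Apollonius}, and your algebra ($\alpha^2-1 = \frac{4(1+\eps)}{\eps^2}$, hence center coefficient $\frac{\eps^2}{4(1+\eps)}$ and radius $\frac{\eps(2+\eps)}{4(1+\eps)}\dY{v}{p}$) matches the stated values. Nothing is missing.
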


\subsection{Background on \WSPD}
\seclab{wspd_def}

For a graph $\G= (\VV, \EE)$, and a set $Y \subseteq \VV$, the \emphi{induced subgraph} of $\G$ by $Y$ is
\begin{equation*}
    \G_Y = \bigl(Y, \SetY{ uv \in \EE}{ u,v \in Y} \bigr).
\end{equation*}
In the following, assume we are given a metric space $(\Ground,\DistChar)$.
For a detailed description of \WSPD and their construction algorithm, see  Har-Peled \cite{h-gaa-11}.

\begin{definition}
    For two sets $B,C \subseteq \Ground$, let
    \begin{math}
        B \otimes C =%
        \SetY{ bc}{b \in B, c \in C, b \neq c}.
    \end{math}

\end{definition}

\begin{definition}
    \deflab{pair_decomposition}%
    For a point set $\P \subseteq \Ground$, a \emphi{pair decomposition} of $\P$ is a set of pairs
    \[
        \WR = \brc{\bigl. \brc{A_1,B_1},\ldots,\brc{A_s,B_s}},
    \]
    such that
    \begin{enumerate*}[label=(\Roman*)]
        \item $A_i,B_i\subset \P$ for every $i$,
        \item $A_i \cap B_i = \emptyset$ for every $i$, and
        \item $\bigcup_{i=1}^s A_i \otimes B_i = \binom{\P}{2} = \P \otimes \P$.
    \end{enumerate*}
\end{definition}

\begin{definition}
    \deflab{well_separated}%
    The pair $\{\PB, \PC\}$ is \emphi{$\tfrac{1}{\eps}$-separated} by $\DistChar$ if
    \begin{equation*}
        \max \pth{\bigl. \diamY{\DistChar}{\PB},
           \diamY{\dC}{\PC} } \leq \eps \ts \dmY{\PB}{\PC},
        \qquad\text{where}\qquad%
        \dmY{\PB}{\PC} = \ds \min_{x \in \PB, y \in \PC} \dmY{x}{y}.
    \end{equation*}
\end{definition}

\begin{definition}
    \deflab{WSPD}%
    For a point set $\P$, a \emphOnly{well-separated pair decomposition} of $\P$ with parameter $1/\eps$, denoted by \emphw{$\tfrac{1}{\eps}$-\WSPD{}}, is a pair decomposition
    \begin{math}
        \WR = \brc{\bigl.  \brc{A_1,B_1},\ldots,\brc{A_s,B_s}}
    \end{math}
    of $\P$, such that, for all $i$, the sets $A_i$ and $B_i$ are $\tfrac{1}{\eps}$-separated.
\end{definition}

\begin{theorem}[\cite{ck-dmpsa-95}]
    \thmlab{WSPD}%
    For $\eps \in (0,1)$, and a set $\P$ of $n$ points in $\Re^d$, one can construct, in $O \bigl( n \log n + {n}/{ \eps^{d}} \bigr)$ time, an $\tfrac{1}{\eps}$-\WSPD of $\P$ of size $O(n/{ \eps^{d}})$.
\end{theorem}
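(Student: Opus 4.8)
The plan is to prove this via the classical fair-split-tree construction of Callahan and Kosaraju. First I would build a hierarchical spatial decomposition of $\P$: a \emph{fair split tree} (a compressed binary variant works too), computable in $O(n\log n)$ time. Each node $w$ stores a subset $P_w \subseteq \P$ and an associated axis-parallel box $R_w$, with the properties that (a) the root stores all of $\P$; (b) the leaves store single points; (c) every internal node has two \emph{non-empty} children whose point sets partition $P_w$; and (d) the box of a child is contained in that of its parent, with the longest side length $\ell(R_w)$ non-increasing down any root-to-leaf path and, crucially, shrinking by a constant factor after a bounded (depending on $d$) number of levels. These are exactly the guarantees a fair split tree provides and that a raw compressed quadtree does not quite give.

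Then I would generate the WSPD by the standard recursive procedure \texttt{wsPairs} applied to a pair $(u,v)$: if $\{P_u,P_v\}$ is $\tfrac{1}{\eps}$-separated (a test on the two boxes taking $O(d)$ time), output the single pair $\{P_u,P_v\}$ and stop; otherwise, assuming w.l.o.g. $\ell(R_u)\ge \ell(R_v)$, split $u$ into its two children $u_1,u_2$ and recurse on $(u_1,v)$ and $(u_2,v)$. The algorithm is started on $(\mathrm{root},\mathrm{root})$, where a node paired with itself recurses into the three pairs among its two children. For correctness I would maintain the invariant that the pairs $P_u\otimes P_v$ over the active calls always partition the still-uncovered pairs of $\binom{\P}{2}$; termination holds because $\ell(R_u)+\ell(R_v)$ strictly decreases and reaches $0$ at leaves, where separation holds trivially (or the pair is empty and nothing is output). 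For the running time, each call does $O(d)$ work, each internal call has at most three children, and every leaf of the recursion is either a node paired with itself (there are $n$ of these) or an output WSPD pair; hence the number of calls is $O(n + |\WR|)$, so both the time and the size bound reduce to showing $|\WR| = O(n/\eps^d)$.

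The \textbf{main obstacle is the size bound}, which I would handle by a packing argument. First establish the structural invariant that whenever \texttt{wsPairs} is invoked on $(u,v)$ we have $\ell(R_{p(u)})\ge \ell(R_v)$ and $\ell(R_{p(v)})\ge \ell(R_u)$, where $p(\cdot)$ denotes the parent; this follows by induction since we always split the node with the larger box. Now fix a tree node $v$ and consider every $u$ such that $\{P_u,P_v\}$ is produced. The call on $(p(u),v)$ was \emph{not} separated, which by the invariant ($\ell(R_{p(u)})\ge\ell(R_v)$) forces $d(R_{p(u)},R_v) < \ell(R_{p(u)})/\eps$; combined with the fair-split-tree size relations this places $R_u$ within distance $O(\ell(R_u)/\eps)$ of $R_v$ while its box has diameter $\Theta(\ell(R_u))$. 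Grouping these $u$'s by the scale $\lfloor \log_2 \ell(R_u)\rfloor$, property (d) guarantees only $O(1)$ scales are relevant, and within a single scale the boxes $R_u$ are interior-disjoint, of comparable size, and all lie in a ball of radius $O(\ell/\eps)$ around $R_v$; a volume/packing estimate in $\Re^d$ (equivalently, a doubling-dimension packing bound) then caps their number by $O(\eps^{-d})$. Summing over all $O(n)$ nodes $v$ yields $|\WR| = O(n/\eps^d)$, which plugged into the running-time accounting above completes the proof. The genuinely delicate point is the bookkeeping tying the not-separated-parent condition to a two-sided bound on $\ell(R_u)$ versus $\ell(R_v)$, so that the packing argument sees regions of a single scale — this is precisely where the fair split tree, rather than an arbitrary compressed quadtree, is needed.
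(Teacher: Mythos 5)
The paper does not prove this statement at all -- it is imported verbatim from Callahan and Kosaraju \cite{ck-dmpsa-95} (with the doubling-metric analogue delegated to Har-Peled and Mendel), so there is no internal proof to compare against. Your proposal is, in essence, a correct reconstruction of the cited Callahan--Kosaraju argument: fair split tree in $O(n\log n)$ time, the standard recursive pair-generation, and the packing bound for the $O(n/\eps^d)$ size, and you correctly isolate the one genuinely delicate point -- the invariant that in any call $(u,v)$ both $\ell(R_{p(u)})\ge\ell(R_v)$ and $\ell(R_{p(v)})\ge\ell(R_u)$, which together with the fair-split size guarantee pins the charged boxes to $O(1)$ scales around $\ell(R_v)$ so that the volume argument applies. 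The only things left implicit are routine: the charged parents $R_{p(u)}$ (for fixed $v$) are pairwise non-ancestral, hence disjoint, which is what licenses the packing count, and the per-call constant-time separation test; neither affects correctness.
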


\begin{remark}
    A similar result to \thmref{WSPD} is known for doubling metrics \cite{hm-fcnld-06}. Formally, for a point set $\P$ in a metric with doubling dimension $d$, one can compute a \WSPD of $\P$ of size $n/\eps^{O(d)}$ in $O \pth{ n \log n + {n}/{ \eps^{O(d)}}}$ time.
\end{remark}

For a pair $\pair = \{B,C \} \in \WR$, its \emphi{diameter} is $\diamX{\pair} = \diamX{ B \cup C}$.

\section{Nearest-neighbor graph via \WSPD}
\seclab{n_n_wspd}

\obsref{separated} points out that the Apollonius ball, constructed for the edge $\dirEdgeY{v}{p}$, used to prune away ``useless'' neighbors of $v$ near $p$, is $\tfrac{1}{\eps}$-well-separated from $v$. Namely, $v$ should have an outgoing edge for each \WSPD pair, say $\{B,C\}$, that contains it (say $v \in B$), to some representative $\repX{C} \in C$ (i.e., the edge is $\dirEdgeY{v}{\repX{C}}$).  This idea gives rise to a direct construction of a navigable graph.

\begin{remark}
    \remlab{search}%
    In the algorithm described next, the greedy routing always picks the minimum outgoing neighbor as the next vertex to use in the search. In addition, the search procedure stops as soon as the improvement in the distance to the query is insignificant in a round. Formally, if $\ell_i$ and $\ell_{i+1}$ are the distances from the query point to two consecutive vertices in the greedy routing, the algorithm stops if $\ell_{i+1} \geq (1-\eps/4)\ell_i$.
\end{remark}

\begin{lemma}
    \lemlab{basic_wspd_search}%
    Let $\P$ be a set of $n$ points in $\Re^d$, and assume $\P$ has \defrefY{spread}{spread} $\spread = \spreadX{\P}$. Then, for a prespecified parameter $\eps \in (0,1)$, one can construct a directed graph $\G$ over $\P$, that has $O( \eps^{-d} n \log \spread )$ edges, and the greedy routing on $\G$, answers $(1+\eps)$-\ANN queries, in $O( \log \spread )$ steps, and $O( \frac{1}{\eps^d \log (1/\eps)} \log^2 \spread)$ time.
\end{lemma}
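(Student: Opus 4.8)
The plan is to turn the pruning picture of \obsref{separated} into an explicit graph: join each vertex to a representative of every well-separated ``bucket'' lying on its far side. Set $\eps'=\eps/4$ and, via \thmref{WSPD} (or its doubling-metric analogue), build a $\tfrac1{\eps'}$-\WSPD $\WR=\{\{A_1,B_1\},\dots,\{A_s,B_s\}\}$ of $\P$, using a compressed quadtree (in a doubling metric, a net-tree) as the underlying hierarchy, so that every point of $\P$ lies in only $O(\log\Spread)$ of the cells. For each pair pick representatives $a_i\in A_i$ and $b_i\in B_i$, and let $\G$ consist of the directed edges $\dirEdgeY{a}{b_i}$ for $a\in A_i$ and $\dirEdgeY{b}{a_i}$ for $b\in B_i$. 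A standard packing argument shows each cell is one side of only $O(\eps^{-d})$ pairs, so $|\EE|\le\sum_i(|A_i|+|B_i|)=O(\eps^{-d})\sum_{p\in\P}(\text{number of cells containing }p)=O(\eps^{-d}n\log\Spread)$; the same count bounds the out-degree of every vertex by $O(\eps^{-d}\log\Spread)$, or by $O(\tfrac{1}{\eps^{d}\log(1/\eps)}\log\Spread)$ with the more careful \WSPD accounting underlying \cite{ix-wcppa-23}. In a doubling metric $\eps^{-d}$ is replaced by $\eps^{-O(d)}$ throughout.

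For a query $q$, let $p^\star=\nnY{\P}{q}$ and $\ell=\dmY{q}{p^\star}$. From an arbitrary start vertex the walk repeatedly moves to the out-neighbor closest to $q$, provided this shrinks the distance to $q$ by at least a factor $2/3$; when no such move exists it halts and returns the point closest to $q$ among the current vertex and its out-neighbors. The single estimate needed: if $v\neq p^\star$ is current, take a \WSPD pair $\{A_i,B_i\}$ with $v$ on one side and $p^\star$ on the other, and let $y$ be the representative on $p^\star$'s side; then $\dmY{y}{p^\star}\le\diamY{\dC}{B_i}\le\eps'\dmY{A_i}{B_i}\le\eps'\dmY{v}{p^\star}\le\eps'(\dmY{q}{v}+\ell)$, hence $\dmY{q}{y}\le\ell+\eps'(\dmY{q}{v}+\ell)$. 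This splits into two cases: (i) if $\dmY{q}{v}>3\ell$ then $\dmY{q}{y}\le(\tfrac13+\tfrac43\eps')\dmY{q}{v}=\tfrac{1+\eps}{3}\dmY{q}{v}\le\tfrac23\dmY{q}{v}$, so the walk cannot halt at $v$; and (ii) if $\dmY{q}{v}\le 3\ell$ then $\dmY{q}{y}\le(1+4\eps')\ell=(1+\eps)\ell$, so the closest out-neighbor of $v$ is a $(1+\eps)$-\ANN. Thus the walk halts only at a vertex with $\dmY{q}{\cdot}\le 3\ell$, and by (ii) its output is a $(1+\eps)$-\ANN (and likewise if it halts at $p^\star$ itself).

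For the step bound, each move multiplies $\dmY{q}{\cdot}$ by at most $2/3$; the start vertex $s$ has $\dmY{q}{s}\le\diamX{\P}+\ell$, every visited vertex other than $p^\star$ has $\dmY{q}{\cdot}\ge\cpX{\P}-\ell$, and the case $\ell\gtrsim\cpX{\P}$ is immediate, so a short case analysis yields $O(\log(\diamX{\P}/\cpX{\P}))=O(\log\Spread)$ moves. Multiplying by the per-move cost of $O(\text{out-degree})$ distance evaluations gives the stated $O(\tfrac{1}{\eps^{d}\log(1/\eps)}\log^2\Spread)$ query time. I expect the real obstacle to be precisely the $O(\log\Spread)$ step count (rather than $O(\eps^{-1}\log\Spread)$): a \WSPD hop taken near the answer improves $\dmY{q}{\cdot}$ only by a $1-\Theta(\eps)$ factor, so one must argue instead that \emph{constant}-factor progress persists as long as $\dmY{q}{v}$ exceeds a fixed multiple of $\ell$, and that once inside that range a single hop to a \WSPD representative already lands on a $(1+\eps)$-\ANN. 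The remaining ingredients — existence and size of the \WSPD, the $O(\log\Spread)$ depth of a compressed quadtree / net-tree, and the reduction to doubling metrics — are routine.
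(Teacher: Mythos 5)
Your construction and correctness argument are essentially the paper's: a $\Theta(1/\eps)$-\WSPD{} with edges from every point of one side to the representative of the other side, the size bound via the $O(\eps^{-d}\log\Spread)$ pairs containing any fixed point, and the estimate $\dY{q}{y}\leq \ell+\eps'\pth{\dY{q}{v}+\ell}$ driving the walk. The $(1+\eps)$-\ANN guarantee and the $O(\log\Spread)$ step bound are correct (the paper uses an $\tfrac{8}{\eps}$-\WSPD{} and slightly different regime thresholds, but that is cosmetic).

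The gap is in the query-time bound. Constant-factor ($2/3$) progress per step gives $O(\log\Spread)$ steps times an $O(\eps^{-d}\log\Spread)$ per-step scan, i.e.\ $O(\eps^{-d}\log^2\Spread)$ -- a factor $\log(1/\eps)$ worse than claimed. The patch you invoke, that a ``more careful \WSPD{} accounting'' bounds the out-degree by $O\bigl(\tfrac{1}{\eps^d\log(1/\eps)}\log\Spread\bigr)$, is unsupported and is not where the saving comes from in \cite{ix-wcppa-23} or in the paper: a point lies in $\Theta(\log\Spread)$ cells and each cell is in $\Theta(\eps^{-d})$ pairs, with no $\log(1/\eps)$ gain available there. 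The missing idea is a finer decay analysis of the walk itself. When $\dY{q}{v}>8\ell/\eps$, your own estimate (with the move going to the \emph{closest} out-neighbor, which the representative on the far side witnesses) gives $\dY{q}{y}\leq \ell+\tfrac{\eps}{4}\pth{\dY{q}{v}+\ell}\leq \eps\ts\dY{q}{v}$, so the ``far'' regime lasts only $O\bigl(\log\Spread/\log(1/\eps)\bigr)$ steps; one further step drops the distance to $O(\ell)$, and one more lands on a $(1+\eps)$-\ANN and the walk halts. This three-regime accounting, multiplied by the $O(\eps^{-d}\log\Spread)$ per-step cost, is exactly how the paper reaches $O\bigl(\tfrac{1}{\eps^d\log(1/\eps)}\log^2\Spread\bigr)$. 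Note also that your closing worry is misplaced: the lemma's stated step count really is $O(\log\Spread)$, which you proved; it is the time bound that needs the $\eps$-factor decay in the far regime.
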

\begin{proof}
    Let $\WR$ be $\tfrac{8}{\eps}$-\WSPD of $\P$ computed in $O( n \log n + n/\eps^d)$ time. Every point of $\P$ participates in at most $O(\tfrac{1}{\eps^d} \log \spread)$ pairs in the \WSPD \cite{h-gaa-11}.  In addition, for any set $X \in \{A, B\} \in \WR$, there is a precomputed representative $\repX{X} \in X$. In particular, for all pairs $\{B,C\} \in \WR$, consider the set of edges
    \begin{equation*}
        \EE(B,C)
        =
        \SetY{ \dirEdgeY{c}{\repX{B}} }{ c \in C} \cup
        \SetY{ \dirEdgeY{b}{\repX{C}} }{ b \in B}.
    \end{equation*}
    Let $\EE$ be the union of all such sets. Clearly,
    \begin{math}
        \cardin{\EE} = \sum_{\{B,C\} \in \WR} \pth{ \cardin{B} + \cardin{C}} = O( \eps^{-d} n \log \spread ).
    \end{math}

    Let $\G = (\P,\EE)$ be the resulting graph. Its maximum outdegree is $\Delta = O(\tfrac{1}{\eps^d} \log \spread)$, as a point has an outgoing edge for each pair it is in.  The exact query process is described above in \remref{search} -- it is the ``mature'' greedy routing with early stop. Specifically, the query process stops as soon as the improvement in the distance to the query fails to shrink by a factor of (say) $1-\eps/4$ from the previous iteration.

    Given a query point $q \in \Re^d$, let $p_1, \ldots, p_k$ be the sequence of vertices visited by the greedy walk in $\G$ for $q$ (here $p_1$ is picked arbitrarily). Let $t \in \P$ be the nearest neighbor to $q$ in $\P$, let $\ell = \dY{q}{t}$. and $\ell_i = \dY{p_i}{q}$, for $i =1,\ldots, k$.

    \begin{figure}[H]
        \centering
        \includegraphics{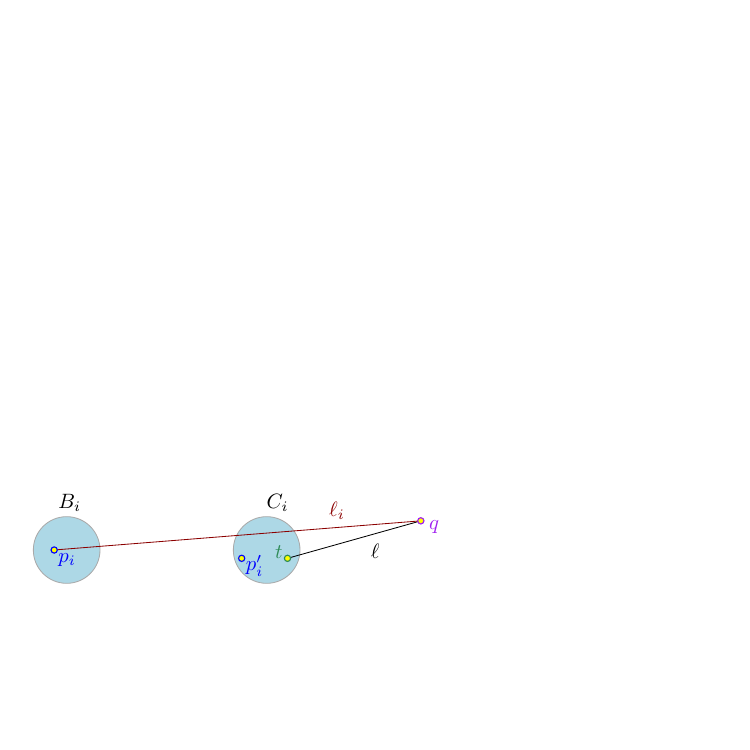}
        \caption{}
        \figlab{wspd}
    \end{figure}

    Let $\{B_i,C_i\}$ be the pair in the \WSPD covering the pair $p_i t$, and assume for concreteness that $p_i \in B_i$ and $t \in C_i$, and let $p_i' = \repX{C_i}$, see \figref{wspd}.  By the \WSPD property, we have that
    \begin{equation*}
        \dY{p_i' }{t}
        \leq
        \diamX{ C_i}
        \leq
        \frac{\eps}{8} \dsY{B_i}{C_i}
        \leq
        \frac{\eps}{8} \dY{p_i}{t}
        \leq
        \frac{\eps}{8}\pth{ \dY{p_i}{q} + \dY{q}{t}}
        =
        \frac{\eps}{8}\pth{ \ell_{i} + \ell}.
    \end{equation*}
    That implies, as $\dirEdgeY{p_i}{p_i'} \in \EGX{\G}$, that the algorithm considered $p_i'$ as its next stop after $p_i$. Namely, we have
    \begin{equation*}
        \ell_{i+1}
        \leq%
        \dY{p_i'}{q}
        \leq
        \dY{p_i'}{t}
        +
        \dY{t}{q}
        \leq
        \frac{\eps}{8}\pth{ \ell_{i} + \ell}
        + \ell
        \leq
        \frac{\eps}{4} \ell_{i}
        + \ell.
    \end{equation*}
    If the algorithm is not there yet, that is $\ell_i > (1+\eps)\ell$, then
    \begin{equation*}
        \ell_{i+1}
        \leq%
        \frac{\eps}{4} \ell_{i} + \ell
        <
        \pth{\frac{\eps}{4} + \frac{1}{1+\eps}} \ell_i
        \leq
        \pth{\frac{\eps}{4} + 1-\frac{\eps}{2}} \ell_i
        =%
        \pth{1-\frac{\eps}{4}} \ell_i,
    \end{equation*}
    and the query process would not stop in this iteration.

    The above already gives us a bound on the number of iterations in the query process. Indeed, if $\ell_1 > 4\diamX{\P}/\eps$, the algorithm stops immediately in the next round, as any point of $\P$ is the desired \ANN. Similarly, if $\ell_i < \cpX{\P}/2$, the algorithm found the nearest-neighbor point, and no further improvement in the current distance is possible, and the query process stops. As the distance shrinks by at least a factor of $1 - \eps/4$ at each iteration, it follows that the algorithm performs $O\Bigl(1+ \log_{1/(1-\eps/4)} \bigl(\diamX{\P}/(\eps \cpX{\P}) \bigr) \Bigr) = O( \Eps \log \spread )$ iterations (assuming that $\tfrac{1}{\eps} < \spread$).

    The above analysis can be further improved by observing that the distance shrinks more quickly during initial iterations.  Indeed, if $\ell_i > 8 \ell/\eps$, then $\ell_{i+1} \leq \eps \ell_i$.  If $\ell_i \in [ 3 \ell, 8\ell/\eps]$, then $\ell_{i+1} \leq 3\ell$. Finally, if $\ell_i \leq 3 \ell$, then
    \begin{math}
        \ell_{i+1}
        \leq%
        \frac{\eps}{4} \ell_{i}
        + \ell
        \leq (1+\eps)\ell.
    \end{math}
    But then, the algorithm terminates in the next few iteration, as $(1-\eps/4)^{10}\ell_{i+1} \leq (1-\eps)(1+\eps)\ell < \ell$, Namely, the algorithm performs $O( \frac{\log \spread}{\log(1/\eps) } + 1 )$ iterations, and each iteration takes $O(\Delta) = O( \eps^{-d} \log \spread)$ time.
\end{proof}

\begin{remark}
    It is tempting to further sparsify the above graph by connecting only representatives, as done in the spanner constructions using WSPD. There lies the rub -- while the new graph still has short paths to the nearest-neighbor, these paths are no longer direct or locally traceable. Computing these paths requires a ``higher-level'' approach. In the settings here,  the target vertex, $t$, is unknown -- all we have is the somewhat opaque information provided by the distance to the query point to guide the search.
\end{remark}

\subsection{Improving performance}

It is not hard to improve the above scheme, as described next. A natural approach is to build two graphs $\G_{1/2}$ and $\G_\eps$ -- the first uses $\eps = 1/2$, and the second uses the given value of $\eps$. We do the $1/2$-\NN greedy walk in $\G_{1/2}$, and then use the end vertex of this walk as a starting point for the $\eps$-\NN greedy walk in $\G_\eps$. This two-round approach yields the following result.

\begin{theorem}
    \thmlab{main_wspd}%
    Given a set $\P$ of $n$ points in $\Re^d$ with spread $\spread$, one can construct two graphs $\G_{1/2}, \G_\eps$ on $\P$, such that $(1+\eps)$-\ANN queries on $\P$ can be answered by first performing a $1/2$-\ANN greedy walk in $\G_{1/2}$, and then using the returned vertex as the starting vertex for a $(1+\eps)$-\ANN greedy walk in $\G_\eps$. The resulting point is a $(1+\eps)$-\ANN to the query point in $\P$, and the two walks take $O(\eps^{-d} \log \spread + \log^2 \spread)$ time overall. The two computed graphs have $O( \tfrac{1}{\eps^d} n \log \spread )$ edges overall.
\end{theorem}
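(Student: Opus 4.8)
The plan is to instantiate \lemref{basic_wspd_search} twice and chain the two resulting greedy walks, using the recurrence inside that lemma's proof to show the second walk is short.

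First I would take $\G_{1/2}$ to be the graph that \lemref{basic_wspd_search} produces for parameter $1/2$, and $\G_\eps$ the graph it produces for the given $\eps$. By the lemma, $\G_\eps$ has $O(\eps^{-d} n \log\Spread)$ edges, and $\G_{1/2}$ has $O(2^{d} n\log\Spread) = O(n\log\Spread)$ edges (with $d$ fixed); so the two together have $O(\eps^{-d} n\log\Spread)$ edges, as claimed. For a query $q$, write $\ell = \dmY{q}{\P}$. Run the $1/2$-\NN greedy walk on $\G_{1/2}$; by \lemref{basic_wspd_search} applied with parameter $1/2$ this terminates at a $(1+1/2)$-\ANN $p$ of $q$, i.e.\ $\dY{p}{q} \le \tfrac{3}{2}\ell$, after $O(\log\Spread)$ steps, each costing $O(2^{d}\log\Spread) = O(\log\Spread)$; hence this first phase takes $O(\log^2\Spread)$ time. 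Then run the $\eps$-\NN greedy walk on $\G_\eps$ starting from $p$.

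Correctness is immediate from \lemref{basic_wspd_search}: the $\eps$-\NN greedy walk on $\G_\eps$ ends at a $(1+\eps)$-\ANN no matter which vertex it starts from. The one quantitative point is the length of this second walk, and this is where I would reopen the case analysis in the proof of \lemref{basic_wspd_search}. With $\ell_i = \dY{p_i}{q}$ that proof shows $\ell_{i+1} \le \tfrac{\eps}{4}\ell_i + \ell$ along the walk, and in particular that once a vertex with $\ell_i \le 3\ell$ is reached, a single additional step yields $\ell_{i+1}\le(1+\eps)\ell$ and the walk halts. Since the first phase delivers $p_1 = p$ with $\ell_1 \le \tfrac{3}{2}\ell \le 3\ell$, the second walk performs only $O(1)$ steps, each of cost $O(\eps^{-d}\log\Spread)$ (the out-degree of $\G_\eps$), for $O(\eps^{-d}\log\Spread)$ time total. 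Summing the two phases gives the stated $O(\eps^{-d}\log\Spread+\log^2\Spread)$ bound.

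The main (and essentially only) obstacle is this $O(1)$ bound on the second walk: it hinges on the second walk beginning from a point already within a constant factor of $\ell$ — which is exactly what the $1/2$-pass guarantees — together with the termination behaviour extracted from \lemref{basic_wspd_search}. One small thing to verify in passing is that, when $\ell_1\in((1+\eps)\ell,3\ell]$, the \WSPD representative $\repX{C_1}$ of the pair covering $\{p_1,t\}$ (with $t$ the true nearest neighbor) is close enough to $q$ to be an admissible move for the greedy rule; this is already implicit in \lemref{basic_wspd_search}'s proof, and degenerate cases such as $q\in\P$ (where $\ell=0$) are handled by the same inequalities.
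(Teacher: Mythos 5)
Your proposal is correct and follows exactly the paper's route: instantiate \lemref{basic_wspd_search} with $1/2$ and with $\eps$, observe that the first walk costs $O(\log^2\Spread)$ because the out-degree is $O(\log\Spread)$ for constant approximation, and that the second walk starts within a constant factor of the optimum so the case analysis of the lemma forces it to terminate in $O(1)$ (the paper says at most two) iterations of cost $O(\eps^{-d}\log\Spread)$ each. The paper's own proof is just a two-sentence summary of this argument, so your write-up is essentially the same proof with the details of the lemma's recurrence made explicit.
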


\begin{proof}
    The key observation is that the first walk takes $O( \log^2 \spread )$ time, as the approximation factor is a constant. While the second walk involves at most two iterations, and thus takes $O(\eps^{-d} \log \spread)$ time.
\end{proof}

\subsubsection{An improved query time}

One can improve the query time even further by slicing the graphs. We point this out as an indication that the above scheme is probably not optimal, although the suggested scheme is a bit involved.

\begin{lemma}
    \lemlab{multi:graphs}%
    The query time of \thmref{main_wspd} can be improved to $O( \log \spread + \tfrac{1}{\eps^d} \log \tfrac{1}{\eps})$.
\end{lemma}
\begin{proof}
    The basic idea is to use the intuition from the previous analysis -- the length of edges used by the walk is exponentially decreasing till one gets close to the query. We can thus use this by limiting the algorithm to use only edges that are roughly in the current resolution. If these edges provide no improvement, the algorithm moves down to a lower resolution.

    Assume the closest-pair distance in $\P$ is $1$. We slice the graph $\G_{1/2}$ into graphs $\H_1, \ldots, \H_m$, where $m = \ceil{\log_2 \spread}$, and $\H_i$ contains all the edges of $\G_{1/2}$ of length in the range $[\spread/2^{i+3},\spread/2^{i-3}]$. It is easy to verify that the degree of each vertex in the graph $H_i$ is $O(1/(1/2)^d) = O(1)$. Note that an edge of $\G_{1/2}$ appears in $7$ of the slice graphs. Now, the idea is to start the $\tfrac{1}{2}$-\NN greedy walk in $\H_i$ for $i=1$.  As soon as it gets stuck, the algorithm moves the walk to $\H_{i+1}$, and continues until it arrives at $\H_m$. Assume that this process ended at $p \in \P$, with $L = \dY{q}{p}$, where $q$ is the query point.

    We now repeat the same slicing idea for $\G_\eps$, and start the walk from $p$ in the first sliced graph that contains edges of length $L$. It is easy to verify that the walk now would use only $O( \log \tfrac{1}{\eps} )$ of these graphs till the length of the edges becomes so small that the search stops, and the desired $(1+\eps)$-\ANN is computed.

   Putting everything together, the resulting running time is $O( \log \spread + \tfrac{1}{\eps^d} \log \tfrac{1}{\eps})$.
\end{proof}

\section{A \NN graph via greedy permutation}
\seclab{n_n_greedy}

\subsection{Background: Greedy permutation.}
\seclab{greedy_permutation}

Given a finite metric space $\FMS = (\P, \dC)$, a \emphi{$\kappa$-greedy permutation}, for some $\kappa \geq 1$, is an ordering $p_1, \ldots, p_n$ of the points of $\P$, with associated radii $r_1 \geq r_2 \geq \cdots \geq r_{n+1}$, such that:
\begin{compactenumA}
    \medskip%
    \item The point $p_1$ is an arbitrary point of $\P$, and $r_1 = \max_{ p\in \P}\dY{p}{p_1}$.

    \medskip%
    \item For all $i \in \IRX{n}= \{ 1,\ldots, n\}$, all the points of $\P$ are covered by the union of balls of radius $\kappa r_{i}$ centered at the points of $\P_{i} = \{ p_1, \ldots, p_i\}$ -- formally, $\P \subseteq \cup_{j=1}^i \ballY{p_j}{\kappa r_i}$.

    \medskip%
    \item For all $i >1$, the distance of $p_i$ from $\P_{i-1}$ is $r_{i-1}$, and furthermore, $\cpX{\P_{i}} = r_{i-1}$ (i.e., the closest-pair distance in $\P_i$ is $r_{i-1}$).
\end{compactenumA}

\begin{observation}
    \obslab{exact}%
    The algorithm that repeatedly picks the furthest point $p_i \in \P \setminus \P_{i-1}$ from $\P_{i-1}$, and adds it to the greedy permutation, computes it exactly (i.e., $\kappa=1$), in quadratic time. The exact greedy permutation is a packing for all prefixes: That is, for all $i$, the set $\P_i$ is an $r_i$-packing\footnote{Nit-packing a bit, it is an $r_{i-1}$-packing, see \defref{packing}, assuming that all pairwise distances in $\P$ are unique.} of $\P$, see \defref{packing}.
\end{observation}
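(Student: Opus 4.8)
The plan is to check directly that the ordering produced by the farthest‑point rule satisfies the three defining properties (A)--(C) of a greedy permutation with $\kappa=1$, to bound the running time by a standard incremental‑distance bookkeeping argument, and then to read off the packing property of the prefixes from (B) and (C). For the running time I would maintain, for every $p\in\P\setminus\P_i$, the current value $\dmY{p}{\P_i}$; these are computed for $i=1$ in $O(n)$ time, and in iteration $i$ one scans the array to locate a farthest point $p_{i+1}$, records $r_i$ as the maximum value found, and then refreshes each stored distance via $\dmY{p}{\P_{i+1}}=\min\pth{\dmY{p}{\P_i},\dmY{p}{p_{i+1}}}$. Each iteration is $O(n)$, so the whole computation takes $O(n^2)$ time, and it produces exactly the radii referred to in (A) and (C).

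For correctness, property (A) is immediate: $p_1$ is arbitrary, and the point selected next is, by the rule, a point of $\P$ at distance $\max_{p\in\P}\dmY{p}{p_1}$ from $p_1$. Property (C) simply records the rule: the radius attached to step $i-1$ is, by definition, $\dmY{p_i}{\P_{i-1}}$, the distance at which $p_i$ enters. The radii are non‑increasing because $\P_i\subseteq\P_{i+1}$ makes $p\mapsto\dmY{p}{\P_i}$ pointwise non‑increasing in $i$, so its maximum over the points still outside $\P_i$ can only shrink.

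This monotonicity also yields the closest‑pair clause. For any $p_b\in\P_i$ with $b>1$ we have $\min_{a<b}\dmY{p_a}{p_b}=\dmY{p_b}{\P_{b-1}}$, and since $\dmY{p_b}{\P_{b-1}}$ is the farthest‑point distance at step $b-1$ it is non‑increasing in $b$; hence $\cpX{\P_i}=\min_{1<b\le i}\dmY{p_b}{\P_{b-1}}=\dmY{p_i}{\P_{i-1}}$, which is precisely the radius that (C) associates with $\P_i$. Property (B) then follows because every $p\in\P$ satisfies $\dmY{p}{\P_i}\le\max_{q\in\P}\dmY{q}{\P_i}=\dmY{p_{i+1}}{\P_i}$, so $\P_i$ covers $\P$ at that radius, which is at most $\cpX{\P_i}$.

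Finally, for the packing statement, (B) gives the covering property of $\P_i$ at radius $r_i=\cpX{\P_i}$ and the computation above gives the separation property $\cpX{\P_i}\ge r_i$; together these say each prefix $\P_i$ is an $r_i$‑packing in the sense of \defref{packing}. I do not expect a genuine obstacle here --- everything collapses to the nesting $\P_1\subseteq\P_2\subseteq\cdots$ and the definition of the selection rule. The only point demanding a word of care is the strict inequality in the covering clause of \defref{packing}: the farthest point attains the covering radius exactly, so to make the covering distance drop strictly below the separation radius one invokes the observation's genericity assumption (all pairwise distances distinct), under which consecutive radii are strictly decreasing and $\dmY{p_{i+1}}{\P_i}<\cpX{\P_i}$.
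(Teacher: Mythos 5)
The paper offers no proof of \obsref{exact} at all --- it is stated as self-evident --- so there is no argument of record to diverge from; your write-up is the verification the authors leave to the reader, and it is correct in substance: the $O(n^2)$ bound via maintaining $\dmY{p}{\P_i}$ and refreshing with a single minimum per new point, the monotonicity of the insertion distances, the identity $\cpX{\P_i}=\dmY{p_i}{\P_{i-1}}$, and the covering bound $\dmY{p}{\P_i}\le \dmY{p_{i+1}}{\P_i}$ are all right. The one place you wobble is the radius indexing: you first set $r_{i-1}=\dmY{p_i}{\P_{i-1}}$ (matching property (A) and the first half of property (C)), but then declare $\cpX{\P_i}=\dmY{p_i}{\P_{i-1}}$ to be ``precisely the radius that (C) associates with $\P_i$,'' which under your own convention is $r_{i-1}$, not $r_i$. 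To be fair, this off-by-one is inherited from the paper's definition itself, whose clause (C) simultaneously demands $\dmY{p_i}{\P_{i-1}}=r_{i-1}$ and $\cpX{\P_i}=r_i$ --- for the exact permutation this would force consecutive radii to coincide, so the definition cannot be read literally. Your closing resolution, taking $r_i=\cpX{\P_i}$ so that the covering distance $\dmY{p_{i+1}}{\P_i}$ drops strictly below $r_i$ once all pairwise distances are distinct, is exactly the reading under which the prefix $\P_i$ satisfies the strict covering clause of \defref{packing}, and it is the reading that makes the observation's footnote do its work; with that convention fixed throughout, the proof is complete.
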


For a set $\P$ of $n$ points in $\Re^d$ (or in a metric space of bounded doubling dimension), Har-Peled and Mendel \cite{hm-fcnld-06} showed how to compute the $\kappa$-greedy permutation in $O(n \log n)$ time, where $\kappa = 1+ 1/n^{O(1)}$. We assume that the exact greedy permutation is available for simplicity of exposition.

An additional useful property of the algorithm of Har-Peled and Mendel is that, for all $i$, one can compute for each point $p_i$, all the points of $\P_{i-1}$ in distance at most (say) $4r_{i-1}/\eps$ from it. Formally, let
\begin{equation}
    F_i = \P_{i-1} \cap \ballY{p_{i}}{8 r_{i-1}/\eps}
    \eqlab{friends}%
\end{equation}
be the \emphi{friend list} of $p_i$ (the friend list definition in \cite{hm-fcnld-06} is roughly the same when $\eps > 1/4$, otherwise one needs to perform a local traversal on the net-tree, to compute $F_i$, that takes $O( |F_i| )$ time). Intuitively, the friend list of $p_i$ is the set of all the points, in the packing $\P_{i-1}$, that are relatively close to $p_i$.

Since $\P_{i-1}$ is a $r_{i-1}$-packing, if we place a ball of radius $r_{i-1}/2$ around each point of $\P_{i-1}$, they would all be interior disjoint. As such, for all $p \in \Re^d$ and $R > 0$, we have that
\begin{equation*}
    \cardin{\ballY{p}{R} \cap \P_{i-1}}
    =
    O\bigl( (1 + R/r_{i-1})^d \bigr).
\end{equation*}
Thus, we have $\cardin{F_i} = O(1/\eps^d)$ for all $i$.  Observe that for all $p_j \in F_i$, we have $j < i$.

\subsection{The graph construction}

Given a set $\P$ of $n$ points in $\Re^d$, and a parameter $\eps \in (0,1/2)$, the algorithm first computes the greedy permutation of $\P$, and the friends list of each point, as described above. Next, the algorithm builds a directed graph $\G =(\P,\EE)$, with the edges being
\begin{equation*}
    \EE
    =
    \SetY{ \dirEdgeY{p_j}{p_i}}{ p_j \in F_i, \text{ for } i =1,\ldots, n}.
\end{equation*}
In the constructed graph, the list of outgoing edges $\EE_v$, from a vertex $v$, is sorted in increasing order by the index of the destination. This ordering can be realized by always adding the outgoing edges at the end of this list.

\paragraph{Answering \ANN queries.}
The search uses the ``impulsive'' greedy routing described in \secref{greedy_routing}. Given a query point $q \in \Re^d$, the algorithm starts with the current point being $c=p_1$. The algorithm now scans the outgoing edges $\dirEdgeY{c}{p_j}$ from the current vertex, sorted by increasing index $j$. The algorithm sets $c=p_j$, as soon as an edge $\dirEdgeY{c}{p_j}$ is encountered such that
\begin{equation*}
    \dY{q}{p_j} \leq (1-\eps/4) \dY{q}{c}.
\end{equation*}
It then restarts the scanning process of the outgoing edges of the new vertex $c$. This process continues until all the outgoing edges of the current vertex have been scanned without finding a profitable move, and the algorithm returns the current node.

\subsection{Analysis}

Clearly the graph $\G$ has $O( n/\eps^d)$ edges, as the $i$\th vertex has at most $|F_i| = O(1/\eps^d)$ incoming edges.

\begin{observation}
    Consider a distance $L>0$, and points $p_j,p_i \in \P$.  An edge $\dirEdgeY{p_j}{p_i} \in \EGX{\G}$, with $j < i$, is \emphw{$L$-admissible} if $\dY{p_j}{p_i} \in [L/2,L]$. This implies that the radius $r_{i-1} =\dmY{p_i}{\P_{i-1}}= \Omega(\eps L)$. Otherwise, $p_j$ is too far from $p_i$ to be connected to it, and the edge would not be present in $\G$. Formally, assume that $r_{i-1}< \eps L /16$, and observe that all the edges incoming into $p_i$ can have length at most  $8r_{i-1}/\eps < L/2$,
    by \Eqref{friends}, which is a contradiction to the edge being $L$-admissible.

    We claim that at most $O( 1/\eps^d)$ \ensuremath{L}-admissible edges emanating from a vertex $p \in \P$.  Indeed, let $p_j$ be the last vertex such that $\dirEdgeY{p}{p_j}$ is $L$-admissible. Then, by the above, $r_{j-1} = \Omega( \eps L)$. Namely, $P_{j}$ is an $\Omega(\eps L)$-packing, and it can contain at most $O(1/\eps^d)$ points in the ball centered at $p$ of radius $L$.

    Let $\bigl.n(p, L)$ denote the number of $L$-admissible edges for $p$. The total number of out-edges of $p$ is at most
    \begin{math}
        \sum_{i=0}^{\log\spread} n( p, \diamX{\P}/2^i ) = O\pth{\eps^{-d} \log \spread},
    \end{math}
    where $\spread$ is the \defrefY{spread}{spread} of $\P$.
\end{observation}

\begin{figure}
    \centering%
    \includegraphics{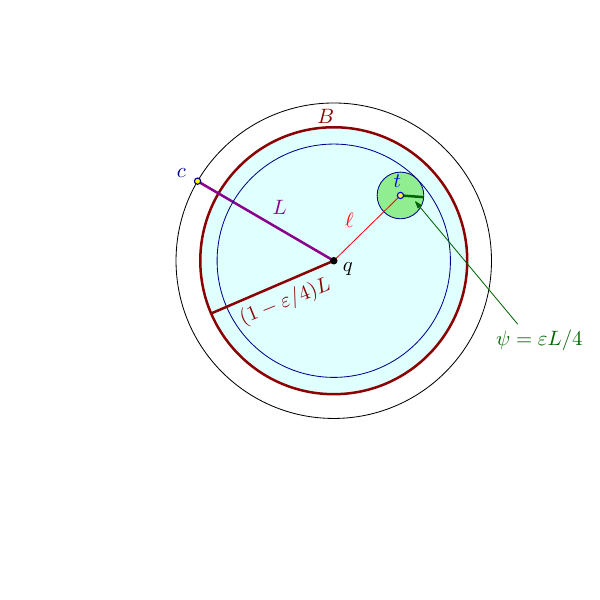}
    \caption{Illustration of proof.}
    \figlab{good_ball}%
\end{figure}

\begin{lemma}
    \lemlab{walk}%
    For any query point $q \in \Re^d$, the greedy routing for $q$ (starting from $p_1$), in the above constructed graph $\G$, returns a point $p \in \P$, such that $\dY{q}{p} \leq (1+\eps)\dmY{q}{\P}$.  The query time is $O( \eps^{-d-1} \log^2 \spread)$, where $\spread=\spreadX{\P}$.
\end{lemma}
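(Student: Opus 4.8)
The plan is to prove two things about the walk: that it is short, and that it ends at a $(1+\eps)$-nearest neighbor.

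\emph{Running time.} First I would record the structural facts. Every edge of $\G$ is directed from a point to one of larger index, so the indices of the vertices $c_0 = p_1, c_1, \dots, c_K$ visited by the walk are strictly increasing; at the same time $\dY{q}{c_0} > \dY{q}{c_1} > \cdots > \dY{q}{c_K}$, each step multiplying the distance to $q$ by at most $1-\eps/4$. Since each vertex is visited at most once and has out-degree $O(\eps^{-d}\log\Spread)$ (by the observation preceding the lemma), the query time is $O(K\eps^{-d}\log\Spread)$, so it suffices to show $K = O(\eps^{-1}\log\Spread)$. For that I would use: $\dY{q}{p_1} \le \dmY{q}{\P} + \diamX{\P}$; and, by the separation property of $\P$, as soon as the walk's distance to $q$ drops below $\cpX{\P}/2$ the current vertex must equal $\nnY{\P}{q}$, and then the walk halts (no other point is closer to $q$). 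Treating separately the trivial regime $\dmY{q}{\P} \ge \diamX{\P}/\eps$ (every point is already a valid answer, so $K = O(1/\eps)$) from the main regime, the walk's distances to $q$ live in an interval of multiplicative length $O(\Spread/\eps)$, whence $K = O(\eps^{-1}\log(\Spread/\eps)) = O(\eps^{-1}\log\Spread)$ and the query time is $O(\eps^{-d-1}\log^2\Spread)$.

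\emph{Correctness.} Suppose for contradiction that the walk halts at $p = p_j$ with $D := \dY{q}{p} > (1+\eps)\ell$, where $\ell := \dmY{q}{\P}$; I will exhibit a profitable outgoing edge from $p_j$, contradicting that it is stuck. Let $t = p_\tau = \nnY{\P}{q}$, so $\dY{p_j}{t} \le D + \ell$. Let $k^\star$ be the largest index with $r_{k^\star} \ge \tfrac{\eps}{3}D$ (one checks, using $\dY{q}{p_1}\le\ell+\diamX{\P}$ and $\eps<\tfrac12$, that such $k^\star$ exists; and the degenerate case $\cpX{\P} \ge \tfrac{\eps}{3}D$ forces $\dY{p_j}{t} < \tfrac{8}{\eps}r_{\tau-1}$, making $p_j \to t$ an edge, and it is profitable since $\ell < D/(1+\eps) \le (1-\eps/4)D$). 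By the covering property of the greedy permutation there is a point $p_m \in \P_{k^\star+1}$ with $\dY{t}{p_m} \le r_{k^\star+1} < \tfrac{\eps}{3}D$. Because $\eps < \tfrac12$, one has $\tfrac{\eps}{3} \le \tfrac{\eps(3-\eps)}{4(1+\eps)}$, so $\dY{q}{p_m} \le \ell + \dY{t}{p_m} \le (1-\eps/4)D$, i.e.\ moving to $p_m$ is profitable; and since $r_{m-1} \ge r_{k^\star} \ge \tfrac{\eps}{3}D$ we get $\tfrac{8}{\eps}r_{m-1} \ge \tfrac{8}{3}D \ge (D+\ell) + \dY{t}{p_m} \ge \dY{p_j}{p_m}$, so $p_j \in F_m$ and the edge $p_j \to p_m$ is present in $\G$ --- \textbf{provided} $m > j$.

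The main obstacle is exactly ruling out $m \le j$. I would argue this from the impulsive scan-in-increasing-index rule together with the monotonicity noted above. If $m \le j$ then $p_m$ has index at most $j$ and $\dY{q}{p_m} < D = \dY{q}{c_K}$. Locate $m$ among the strictly increasing indices of the walk's vertices: if $m$ is the index of some $c_l$ then $\dY{q}{c_l} = \dY{q}{p_m} < \dY{q}{c_K}$ contradicts the monotone decrease ($m = j$ is excluded outright, since $\dY{t}{p_m} < \tfrac{\eps}{3}D < D - \ell$ would otherwise fail); otherwise $m$ lies strictly between the indices of two consecutive walk vertices $c_{l-1}, c_l$, and when the walk left $c_{l-1}$ it scanned past $p_m$, so either $\dY{q}{p_m} > (1-\eps/4)\dY{q}{c_{l-1}}$ --- impossible, as $\dY{q}{c_{l-1}} \ge D$ and $\dY{q}{p_m} \le (1-\eps/4)D$ --- or the edge $c_{l-1} \to p_m$ is absent, i.e.\ $\dY{c_{l-1}}{p_m} > \tfrac{8}{\eps}r_{m-1}$. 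It is this last subcase that needs real care: here I would rerun the covering-point construction of the previous paragraph with the scale $\tfrac{\eps}{3}\dY{q}{c_{l-1}}$ in place of $\tfrac{\eps}{3}D$ and push the argument to the first walk vertex at which the contradiction can no longer be deferred; reconciling the merely $\Omega(\eps^2)$-size lower bound one gets for the resolution of a walk vertex with the $\Omega(\eps)$-size bound needed to certify the forward edge is the delicate point of the whole proof. Everything else --- the packing and covering estimates and the triangle-inequality bookkeeping --- is routine and largely already recorded in the material preceding the lemma.
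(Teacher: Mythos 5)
Your runtime argument and your ``forward'' case are fine and essentially match the paper: the out-degree bound $O(\eps^{-d}\log\Spread)$, the geometric decrease of the distance to $q$ per move, and the existence of a profitable edge $\dirEdgeY{p_j}{p_m}$ into a point near $t=\nnY{\P}{q}$ whose associated radius is $\Omega(\eps\dY{q}{p_j})$ are all correct (the paper realizes the same idea by taking $p_\alpha$ to be the \emph{minimum-index} point of the ball $\ballY{t}{\eps L/4}$, which gives $r_{\alpha-1}\geq \eps L/4$ directly from the packing property, rather than your scale-based index $k^\star$).

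However, there is a genuine gap, and it is exactly where you say ``needs real care'': the case $m\le j$ (the paper's case $\alpha<j$) is left unproved, and your sketch for it does not close. The missing idea is twofold. First, the target point must be chosen as the minimum-index point of the ball around $t$, not merely some covering point at a fixed scale: only then is its index guaranteed to be smaller than the index of every point the walk could have jumped over, which is what lets the increasing-index, impulsive scan force a contradiction. Second --- and this resolves the $\Omega(\eps)$ versus $\Omega(\eps^2)$ mismatch you flag --- when you back up to the earlier walk vertex $p_\beta$ (the \emph{last} walk vertex whose index is below that of the target), you must re-run the construction at \emph{that vertex's own scale}: with $L^+=\dY{q}{p_\beta}>L$, enlarge the ball to $\ballY{t}{\eps L^+/4}$ and take its minimum-index point $p_{\alpha^+}$ (so $\alpha^+\le\alpha$ and $r_{\alpha^+-1}\ge \eps L^+/4$). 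Then both the friend-list threshold $\tfrac{8}{\eps}r_{\alpha^+-1}$ and the triangle-inequality bound $\dY{p_\beta}{p_{\alpha^+}}\le 2L^+$ scale linearly with $L^+$, so the edge $\dirEdgeY{p_\beta}{p_{\alpha^+}}$ is in $\G$ and is profitable; since its destination index is at most $\alpha$, the scan from $p_\beta$ must move to a vertex of index $\le\alpha$, contradicting either the maximality of $\beta$ or (if it lands on $p_\alpha$ itself) the monotone decrease of the walk's distances. Your plan instead keeps the scale $\tfrac{\eps}{3}D$ tied to the final vertex, which is precisely why the edge certificate fails at the earlier vertex, and your fallback (``push to the first walk vertex at which the contradiction can no longer be deferred'') is not carried out; moreover, a covering point at a coarser scale need not have index below that of the vertex the walk actually moved to, so the descent you gesture at does not obviously terminate in a contradiction without the minimum-index choice. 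As written, the proposal therefore does not establish the correctness claim of the lemma.
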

\begin{proof}
    Assume the algorithm just moved to the point $p_j \in \P$ and let $L = \dY{q}{p_j}$. Let $t$ be the nearest-neighbor to $q$ in $\P$, with $\ell = \dY{q}{t}$. If $L \leq (1+\eps) \ell$, the algorithm gets the desired \ANN and returns. Otherwise, $L > (1+\eps) \ell$ and
    \begin{equation*}
        \Bigl(1- \frac{\eps}{4}\Bigr) L - \frac{\eps}{4} L
        =
        L (1-\tfrac{\eps}{2})
        \geq
        (1+\eps)(1-\tfrac{\eps}{2})  \ell
        \geq%
        (1+\tfrac{\eps}{4})  \ell
        >
        \ell,
    \end{equation*}
    since $\eps \leq 1/2$.  Namely, for $\psi = \eps L /4$, the ball $\ballC = \ballY{t}{\psi}$ is fully contained inside the ball $B=\ballY{q}{(1-\eps/4)L}$, see \figref{good_ball}.

    The algorithm has not scanned any point in $B \cap \P$. Indeed, if it had scanned such a point, it would have moved to this point. Let $p_\alpha$ be the point in $\P$, with minimum index $\alpha$, such that $p_\alpha$ is in the ``small'' ball $\ballC$.  Assume for the time being that $\alpha > j$ (i.e., the current point $c=p_j$). We have
    \begin{equation*}
        r_{\alpha-1}
        \geq
        r_\alpha =
        \dsY{p_\alpha}{\P_{\alpha-1}}
        \geq
        \dsY{t}{\P_{\alpha-1}}
        \geq
        \psi,
    \end{equation*}
    as $p_\alpha$ is the furthest point in $\P$ from $\P_{\alpha-1}$ (by the greedy permutation construction), and the ball $\ballC$ does not contain any point of $\P_{\alpha-1}$, see \obsref{exact}. We conclude that
    \begin{equation*}
        \dY{p_j}{p_\alpha}
        \leq
        \dY{p_j}{q}
        + \dY{q}{t}
        + \dY{t}{p_\alpha}
        \leq
        2 L
        =%
        \frac{8}{\eps} \psi
        \leq
        \frac{8}{\eps} r_{\alpha-1}.
    \end{equation*}
    But then the algorithm added the edge $\dirEdgeY{p_j}{p_\alpha}$ to $\G$ during its construction, see \Eqref{friends}.  Furthermore,
    \begin{equation*}
        \dY{q}{p_{\alpha}}
        \leq
        \dY{q}{t} + \dY{t}{p_{\alpha}}
        \leq
        \ell + \frac{\eps}{4}L
        \leq
        \bigl(1-\frac{\eps}{4}\bigr)L.
    \end{equation*}
    Thus, either the algorithm moved to $p_\alpha$, or some other close point to $q$. Namely, the distance of the point the algorithm moved to after $p_j$ had decreased the distance to $q$ by a factor of (at least) $1-\eps/4$.

    If $\alpha < j$, consider the last point $p_\beta$ that the algorithm moved to (before moving to $p_j$) with $\beta < \alpha$. But then, the same argument as above shows that $\dirEdgeY{p_\beta}{p_\alpha} \in \EGX{\G}$, see \remref{elaborate} below. Namely, the algorithm must have moved to $p_{\alpha}$, and thus never moved to $p_j$, which is a contradiction.

    The number of steps performed by the algorithm is $O( \log_{1/(1-\eps/4)} \spreadX{\P} ) = O( \eps^{-1} \log \spread)$. Each scan naively takes $O( \eps^{-d} \log \spread )$ time, which bounds the maximum outdegree in the graph $\G$, thus implying the stated bound.
\end{proof}

\begin{remark}
    \remlab{elaborate}%
    We elaborate here on the ``same'' argument above. The algorithm visited a vertex $p_\beta$, then took an edge to a later vertex $p_\gamma$, such that $\beta < \alpha < \gamma$ (i.e., the algorithm skipped\footnote{Or ``overflew'' $p_\alpha$, recalling a memorable excuse why a commercial flight one of the authors took did not land in its stated midway destination.} $p_\alpha$), on its way to the current vertex $p_j$. As a reminder, $p_\alpha$ is the first point (in the permutation) in $\ballC$. We have $L^+ = \dY{q}{p_\beta} > \dY{q}{p_j} = L$. Let $\psi^+ = \eps L^+ /4$, and observe that $\ballC \subseteq \ballC^+ = \ballY{t}{\psi^+} \subseteq B^+ =\ballY{q}{(1-\eps/4)L^+}$. But then, $r_{\alpha-1} \geq \psi^+$, and (arguing as above) the edge $\dirEdgeY{p_\beta}{ p_\alpha}$ is in the graph, and the search algorithm is forced to take it when scanning the outgoing edges of $p_\beta$ -- a contradiction.
\end{remark}

\paragraph{Improving the query process.}

We rebuild the above graph so that it answers $(1+\eps/4)$-\ANN queries. The above algorithm is \emphw{forward scanning} -- if an edge $\dirEdgeY{p_j}{p_i}$ is inspected by the algorithm, all future edges $\dirEdgeY{p_u}{p_v}$ inspected by the algorithm would have $v \geq i$ (we also have that $u=j$ or $u \geq i$).

The idea is to modify the algorithm so that it terminates early.
\begin{claim}
    \clmlab{early_stop}%
    If $e = \dirEdgeY{p_j}{p_i}$ is inspected by the algorithm, $\dY{q}{p_j} < \dY{q}{p_i}$, and $r_i < (\eps/8)\dY{q}{p_j}$, then $p_j$ is $(1+\eps)$-\ANN to $q$ in $\P$, and the algorithm can stop.
\end{claim}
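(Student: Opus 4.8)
The plan is to exploit the structure of the greedy permutation together with the forward-scanning property of the query algorithm. First I would set $L = \dY{q}{p_j}$ and, aiming for a contradiction, suppose $p_j$ is \emph{not} a $(1+\eps)$-\ANN, i.e.\ $L > (1+\eps)\ell$ where $\ell = \dY{q}{t}$ and $t = \nnY{\P}{q}$. As in the proof of \lemref{walk}, this gives room: the ball $\ballC = \ballY{t}{\eps L /4}$ is contained in $B = \ballY{q}{(1-\eps/4)L}$, and since the algorithm has not yet moved away from $p_j$ it has not scanned any point of $B \cap \P$, in particular no point of $\ballC \cap \P$.

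Next I would locate the earliest point $p_\alpha$ of the permutation lying in $\ballC$ (it exists, since $t$ itself is in $\P \cap \ballC$). The key calculation is the same packing/greedy argument used in \lemref{walk}: because $\ballC$ contains no point of $\P_{\alpha-1}$, we get $r_{\alpha-1} \geq r_\alpha \geq \dsY{t}{\P_{\alpha-1}} \geq \eps L/4$, and by the triangle inequality $\dY{p_j}{p_\alpha} \leq 2L = (8/\eps)(\eps L/4) \leq (8/\eps) r_{\alpha-1}$, so the edge $\dirEdgeY{p_j}{p_\alpha}$ is present in $\G$ (by \Eqref{friends}), assuming $\alpha > j$; the case $\alpha < j$ is handled exactly as in \remref{elaborate} by backing up to the last vertex the algorithm moved to with index below $\alpha$. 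Now I invoke the hypothesis of the claim. Since the edge $e = \dirEdgeY{p_j}{p_i}$ was inspected and the out-edges are scanned in increasing order of destination index, \emph{every} edge with destination index $\leq i$ out of $p_j$ was already inspected and rejected (otherwise the algorithm would have moved). The hypothesis $r_i < (\eps/8)\dY{q}{p_j} = (\eps/8)L$ forces, via the same admissibility/packing bound, $\alpha \geq i$: indeed any earlier point $p_{\alpha'}$ of $\ballC$ with $\alpha' \le i$ would need $r_{\alpha'-1} \geq r_{i-1} \geq r_i$, but we showed $r_{\alpha-1}\ge \eps L/4 > r_i$, so such $p_{\alpha'}$ cannot precede index $i$ — hence $\alpha \geq i$, and the edge $\dirEdgeY{p_j}{p_\alpha}$ has destination index $\geq i$, so it is still to be scanned (or is $e$ itself). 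Either way the algorithm is forced to inspect $\dirEdgeY{p_j}{p_\alpha}$, and as in \lemref{walk} we have $\dY{q}{p_\alpha} \leq \ell + \eps L/4 \leq (1-\eps/4)L$, so it would move — contradicting that $p_j$ is the current vertex with no profitable move yet found along $e$.

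The main obstacle I anticipate is making the index comparison between $\alpha$ and $i$ fully rigorous: I need to be careful that "$r_i$ small" genuinely prevents any point of $\ballC$ from appearing in the permutation before index $i$, which relies on the monotonicity $r_1 \geq r_2 \geq \cdots$ and on the exact-greedy-permutation fact that $\dsY{p_\alpha}{\P_{\alpha-1}} = r_{\alpha-1}$ (\obsref{exact}). A secondary subtlety is the $\alpha < j$ / skipped-vertex case interacting with the early-stopping condition — one must check that backing up to $p_\beta$ does not invalidate the hypothesis $r_i < (\eps/8)\dY{q}{p_j}$, but since $\dY{q}{p_\beta} > \dY{q}{p_j}$ this only makes the separation easier, exactly as in \remref{elaborate}. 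Once the index bookkeeping is pinned down, the geometric estimates are verbatim those already established for \lemref{walk}.
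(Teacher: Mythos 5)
There is a genuine gap, and it sits exactly at the step you flagged: the index comparison between $\alpha$ and $i$ is carried out with the monotonicity of the radii reversed. Since $r_1 \geq r_2 \geq \cdots$ and you have established $r_{\alpha-1} \geq \eps L/4 > (\eps/8)L > r_i$, the conclusion forced by monotonicity is $\alpha - 1 < i$, i.e.\ $\alpha \leq i$ (and $\alpha \neq i$, because $p_\alpha \in \ballC$ is closer to $q$ than $p_j$ while the hypothesis says $p_i$ is farther), hence $\alpha < i$ --- the \emph{opposite} of the $\alpha \geq i$ you assert. Your intermediate sentence (``any earlier point $p_{\alpha'}$ with $\alpha' \leq i$ would need $r_{\alpha'-1} \geq r_i$, but $r_{\alpha-1} > r_i$, so such a point cannot precede index $i$'') does not follow: $r_{\alpha'-1} \geq r_i$ is automatic for $\alpha' \leq i$ and is in no tension with $r_{\alpha-1} > r_i$. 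Moreover, even granting $\alpha \geq i$, your closing step is not a contradiction: if the profitable edge $\dirEdgeY{p_j}{p_\alpha}$ is \emph{yet to be scanned}, the algorithm's current state (sitting at $p_j$, inspecting $e$) is perfectly consistent with it moving later --- but the whole point of \clmref{early_stop} is to justify \emph{not} scanning those later edges. The contradiction has to go the other way: with the corrected $\alpha < i$, the forward scan of $p_j$'s sorted out-list has already inspected $\dirEdgeY{p_j}{p_\alpha}$ (when $\alpha > j$; the case $\alpha < j$ is delegated to \remref{elaborate} as you say), and since $\dY{q}{p_\alpha} \leq \ell + \eps L/4 \leq (1-\eps/4)L$ that inspection would already have triggered a move off $p_j$ --- contradicting that $p_j$ is still current. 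With that repair your argument closes, and it is genuinely different from the paper's: the paper instead observes that the run so far is exactly a complete run of the algorithm on the induced subgraph $\G_{\P_i}$ (which coincides with the graph built on the prefix $\P_i$), invokes \lemref{walk} as a black box to get that $p_j$ is a $(1+\eps/4)$-\ANN with respect to $\P_i$, and then converts this to a $(1+\eps)$-guarantee with respect to $\P$ using the $r_i$-packing property together with $r_i < (\eps/8)\dY{q}{p_j}$ --- this conversion is precisely why the graph is rebuilt with $\eps/4$.

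A secondary point: because your (corrected) route derives the contradiction directly against the nearest neighbor in all of $\P$, it would in fact yield the $(1+\eps)$-guarantee without the prefix-to-full-set conversion, i.e.\ without the recalibration to $\eps/4$ that the paper's proof needs; that is a small bonus of the direct approach, but as written the proposal does not establish the claim because the pivotal inequality on indices and the final contradiction are both inverted.
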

\begin{proof}
    Since $\P_i$ is an $r_i$-packing of $\P$, there must be a point $p' \in \P_i$ that is in distance $r_i$ from $t$, where $t = \nnY{q}{\P}$. We can interpret the algorithm as working on $\P_i$ (instead of $\P$). Indeed, the induced subgraph on $\P_i$, $\G_i = \G_{\P_i}$, is the same as the graph the algorithm would build if the input point set is $\P_i$. The query process on $\G_i$ is identical to the one on $\G$, as long as we inspect edges in $\G_i$.  Thus, if we run the algorithm on $\G_i$, $e$ is the last edge inspected. But $p_i$ is not an improvement, so $p_j$ is the point the algorithm returns when run on $\G_i$.  \lemref{walk} then implies that $p_j$ is $(1+\eps/4)$-\ANN (as we calibrated $\eps$ to be $\eps/4$).

    Thus, we have that  $\nu = \dY{q}{p_j} \leq (1+\eps/4)\dmY{q}{\P_i}$, and
    \begin{equation*}
        r_i
        <%
        \frac{\eps}{8}\dY{q}{p_j}
        \leq
        \frac{\eps}{8}\bigl(1+\frac{\eps}{4}\bigr)\dmY{q}{\P_i}
        \leq%
        \frac{\eps}{4}\pth{\dmY{q}{\P} + r_i}
        \implies
        r_i
        \leq
        \frac{\eps}{4(1-\eps/4)}\dmY{q}{\P}
        \leq
        \frac{\eps}{3}\dmY{q}{\P}.
    \end{equation*}
    We conclude that
    \begin{equation*}
        \dY{q}{p_j}
        \leq%
        (1+\eps/4)\dmY{q}{\P_i}
        \leq
        (1+\eps/4)\pth{\dmY{q}{\P} + r_i}
        \leq
        \pth{1+\frac{\eps}{4}}
        \pth{1+\frac{\eps}{3}}
        \dmY{q}{\P}
        \leq
        \pth{1+\eps}
        \dmY{q}{\P}.
    \end{equation*}
\end{proof}

Let $\diamC = \diamX{\P}$, and let
\begin{equation*}
    R_i = \frac{\diamC}{2^i   }
\end{equation*}
for $i=0,1,\ldots, h$, where $h = \ceil{\log_2 \spreadX{\P}}$.  Consider the greedy permutation $p_1, \ldots, p_n$, and the associated radii $r_1 \geq r_2 \geq \cdots \geq r_n$. The \emphi{$i$\th epoch} of $\P$, is a block $B_i = \permut{ p_\alpha, \ldots, p_\beta}$, such that $\alpha < \beta$, $|\beta -\alpha|$ is maximal, and $r_\alpha, r_{\alpha+1}, \ldots, r_\beta \in [R_i,R_{i-1})$.

\begin{lemma}
    When using early stop, the query time of the \ANN algorithm is at most $O( \eps^{-d} \log \spread)$.
\end{lemma}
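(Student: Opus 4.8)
The plan is to charge the running time to the number of edge inspections performed by the search (each inspection is a single distance evaluation and a comparison). The starting point is the \emphw{forward-scanning} property recorded above: once an edge with head $p_i$ has been inspected, every later inspected edge has head $p_v$ with $v\ge i$, and since the walk's vertex indices strictly increase it never returns to a vertex, so each adjacency list is scanned only once and no edge is inspected twice. Consequently it suffices to bound the number of distinct indices that ever appear as the head of an inspected edge.

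Let $p_{j_1},\dots,p_{j_S}$ be the visited vertices, write $\mu_t=\dY{q}{p_{j_t}}$ (so each move shrinks this by a factor at most $1-\eps/16$ in the $\eps/4$-calibrated graph; the constants below are schematic), and let $\ell=\dmY{q}{\P}$. I would split the inspected edges into three groups: (ii) the single edge traversed at each vertex; (iii) everything inspected at a vertex with $\mu_t\le(1+\eps)\ell$; and (i) everything else, i.e.\ inspected at a vertex with $\mu_t>(1+\eps)\ell$ but not traversed. Group (ii) has $S=O(\eps^{-1}\log\Spread)\le O(\eps^{-d}\log\Spread)$ members. For group (iii): since $\mu$ decreases geometrically and always satisfies $\mu_t\ge\ell$, once $\mu_t\le(1+\eps)\ell$ the walk can make only $O(1)$ further moves, and each of those $O(1)$ vertices is scanned over at most its full out-degree $O(\eps^{-d}\log\Spread)$; moreover, by \clmref{early_stop}, whenever early-stop fires it fires at a vertex that is already a $(1+\eps)$-\ANN, i.e.\ with $\mu_t\le(1+\eps)\ell$, so the early-stop termination is also absorbed into group (iii). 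Thus groups (ii) and (iii) together contribute $O(\eps^{-d}\log\Spread)$.

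The substance is group (i), and here the early-stop rule is essential. The claim is that while $\mu_t>(1+\eps)\ell$, every out-edge of $p_{j_t}$ inspected before the traversed one has head $p_i$ with $r_i=\Omega(\eps\mu_t)$. To see this I would invoke the escape-point argument from the proof of \lemref{walk}: letting $p_{\alpha_t}$ be the first point of $\P$ inside $\ballY{t}{\Theta(\eps\mu_t)}$, one has $\alpha_t>j_t$ (the case $\alpha_t=j_t$ forces $\mu_t\le(1+\eps)\ell$, hence group (iii), and $\alpha_t<j_t$ is impossible by \remref{elaborate}), the edge $\dirEdgeY{p_{j_t}}{p_{\alpha_t}}$ is present in $\G$ and improving, so the walk traverses an edge of index $\le\alpha_t$; every edge scanned strictly earlier has head index $<\alpha_t$, hence radius $\ge r_{\alpha_t-1}=\Omega(\eps\mu_t)$ since radii are non-increasing, and in particular the early-stop test never triggers at such an edge. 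Now group these type-(i) edges by the epoch $B_k$ containing the head $p_i$ (so $r_i\in[R_k,R_{k-1})$): such an edge satisfies $\mu_t=O(r_i/\eps)=O(R_k/\eps)$, and $\dY{p_{j_t}}{p_i}=O(r_{i-1}/\eps)=O(R_k/\eps)$ because the edge lying in $\G$ means $p_{j_t}$ is on the friend list of $p_i$ — the one exception per epoch being its first point, whose predecessor radius may be larger. Hence, apart from that lone exception, all type-(i) heads lying in $B_k$ sit in a ball of radius $O(R_k/\eps)$ about $q$, while any two of them are at distance $\ge R_k$ since they lie in a common prefix $\P_{i'}$, which is an $r_{i'}$-packing with $r_{i'}\ge R_k$. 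A packing bound then limits the number of distinct type-(i) heads per epoch to $O(\eps^{-d})$, and summing over the $O(\log\Spread)$ epochs gives $O(\eps^{-d}\log\Spread)$; adding the three contributions completes the proof.

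The main obstacle is exactly the radius lower bound $r_i=\Omega(\eps\mu_t)$ for the non-traversed inspected edges in group (i): this is what forbids the search from inspecting a short ``leftover'' edge leading into a far-in-the-future, tiny-radius point, which is precisely how the unmodified algorithm loses the extra $\log\Spread$ (and $\eps^{-1}$) factors. Making this rigorous requires reusing the escape-point machinery of \lemref{walk} (the first improving out-edge has index at most $\alpha_t$, and everything scanned before it has strictly larger radius index), together with carefully placing the degenerate case $\alpha_t=j_t$ into group (iii), and checking that the early-stop test, which could in principle fire within group (i), in fact cannot while $\mu_t>(1+\eps)\ell$.
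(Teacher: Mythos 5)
Your proof is correct, and it reaches the bound by a genuinely different route than the paper's in how the in-vain inspections are charged. The paper, after setting aside the $O(\eps^{-1}\log \Spread)$ traversed edges, handles all remaining inspections uniformly: it groups them by the epoch of the destination and uses \clmref{early_stop} as the engine of the distance bound -- when the first in-vain edge into epoch $B_k$ is scanned, the scanning vertex must be within $O(R_k/\eps)$ of $q$ (otherwise the early-stop test would already have fired), so all in-vain destinations of that epoch lie in a ball of radius $O(R_k/\eps)$ around $q$ while being pairwise $R_k$-separated, giving $O(\eps^{-d})$ per epoch. You instead split by whether the scanning vertex is still far from $q$: at far vertices you re-derive the radius bound $r_i = \Omega(\eps \mu_t)$ for non-traversed destinations from the escape-point argument of \lemref{walk} (the first improving out-edge has index at most $\alpha_t$ and the adjacency lists are index-sorted), and only then apply the same per-epoch packing count, at the cost of extra bookkeeping (destination-distinctness via forward scanning, the first-point-of-an-epoch exception), while the $O(1)$ near vertices are simply paid at their full out-degree $O(\eps^{-d}\log\Spread)$; \clmref{early_stop} enters only to guarantee that early stopping cannot terminate the walk while the current vertex is still far. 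Two remarks: your parenthetical that the early-stop test cannot even trigger on a group-(i) edge does not survive the paper's literal constants (in the $\eps/4$-calibrated graph the escape radius is $\eps\mu_t/16$ while the stop threshold is $\eps\mu_t/8$), but nothing breaks, since -- as you also note -- \clmref{early_stop} already confines any firing to a vertex that is a $(1+\eps)$-\ANN, i.e.\ to your group (iii); and a by-product of your decomposition is that the early-stop rule is barely used, as your count of far-vertex in-vain scans does not depend on it, so your analysis suggests the same $O(\eps^{-d}\log\Spread)$ bound already for the unmodified scan, a sharpening of \lemref{walk} that the paper obtains only through the early-stop modification. The paper's argument is shorter; yours isolates where the saving really comes from (sorted lists plus escape edges) and needs the early-stop rule only for the termination bookkeeping.
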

\begin{proof}
    The algorithm's running time is proportional to the number of edges $\dirEdgeY{p_j}{p_i}$ it scans. There could be at most $O( \eps^{-1} \log \spread)$ edges that cause the algorithm to change the current vertex, as each such change decreases the \NN distance by a factor of $1-O(\eps)$.

    So we only have to pay for edges scanned in vain, without triggering a change to the current vertex. And let $E_i$ be all these edges whose destination is in the $i$\th epoch $B_i$, and let $V_i \subseteq B_i$ be the set of destinations of the edges of $E_i$.

    Let $x \rightarrow y$ be the first edge of $E_i$ scanned, and let $L_i = \dY{q}{x}$.  \clmref{early_stop} implies that $L_i = O( R_i/\eps)$ (as otherwise the algorithm would have terminated). But then, all the points of $V_i$ are contained inside the ball $\ballY{q}{2L_i}$. Since these points are all at a distance of at least $R_i/2$ from each other, it follows that
    \begin{equation*}
        \cardin{V_i} = \cardin{B_i \cap \ballY{q}{2L_i} } = O(1/\eps^d).
    \end{equation*}
    Since there are $O( \log \spread )$ epochs, it follows that the total number of edges scanned in vain is $O( \eps^{-d} \log \spread)$, which also bounds the running time.
\end{proof}

\begin{theorem}
    \thmlab{a_n_n_main}%
    Given a set $\P$ of $n$ points in $\Re^d$, and a parameter $\eps \in (0,1)$, one can construct a directed graph $\G = (\P, \EE)$ with $O( n /\eps^d)$ edges, such that given a query point $q$, one can compute a $(1+\eps)$-\ANN to $q$ by performing a greedy $\eps$-\NN walk in $\G$. This walk takes $O( \eps^{-d} \log \spread)$ time, where $\spread$ is the spread of $\P$.
\end{theorem}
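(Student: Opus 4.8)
The plan is to package the three results established earlier in this section. \textbf{Construction and size.} Compute the (exact) greedy permutation $p_1,\ldots,p_n$ of $\P$ together with the friend lists, but run the whole construction with the accuracy parameter set to $\eps/4$ rather than $\eps$ (this is the ``rebuild so that it answers $(1+\eps/4)$-\ANN queries'' step): set $F_i = \P_{i-1} \cap \ballY{p_i}{32 r_{i-1}/\eps}$ and $\EE = \Set{\dirEdgeY{p_j}{p_i}}{p_j \in F_i,\ i = 1,\ldots,n}$, keeping the outgoing list of each vertex sorted by destination index. The packing argument following \Eqref{friends} still gives $\cardin{F_i} = O\pth{(32/\eps)^d} = O(1/\eps^d)$ for all $i$; since $p_i$ has exactly $\cardin{F_i}$ incoming edges, $\cardin{\EE} = \sum_i \cardin{F_i} = O(n/\eps^d)$, the claimed size. (By \cite{hm-fcnld-06} the permutation and friend lists, hence $\G$, are computable in near-linear time, though the theorem only asserts the size.) Since $\eps \in (0,1)$ we have $\eps/4 \le 1/2$, so the earlier lemmas apply to the internal parameter.

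\textbf{Query and correctness.} Run the impulsive greedy routing of \secref{n_n_graphs} from $p_1$ — at the current vertex $c$, scan outgoing edges in order of destination index and jump to $p_j$ as soon as $\dY{q}{p_j} \le (1-\eps/16)\dY{q}{c}$ — augmented with the early-stop test of \clmref{early_stop}: when an inspected edge $\dirEdgeY{p_j}{p_i}$ has $\dY{q}{p_j} < \dY{q}{p_i}$ and $r_i < (\eps/8)\dY{q}{p_j}$, halt and return $p_j$. If the walk ends normally (every outgoing edge of the current vertex scanned, no profitable move, early-stop never fired), then \lemref{walk} applied with the internal parameter $\eps/4$ says the returned vertex is a $(1+\eps/4)$-\ANN, a fortiori a $(1+\eps)$-\ANN. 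If instead the early-stop test fires, \clmref{early_stop} gives directly that the returned vertex is a $(1+\eps)$-\ANN. Either way the output is valid.

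\textbf{Running time.} This is exactly the bound from the epoch lemma above. Edges that trigger a move number $O(\eps^{-1}\log\Spread)$, since each move shrinks the distance to $q$ by a factor $1 - \Theta(\eps)$. The edges scanned in vain split among the $O(\log\Spread)$ epochs; within the $i$\th epoch, \clmref{early_stop} forces the distance $L_i = \dY{q}{x}$ at which that epoch's first vain edge $\dirEdgeY{x}{y}$ is scanned to satisfy $L_i = O(R_i/\eps)$, so every destination of a vain edge of that epoch lies in $\ballY{q}{2L_i}$; being pairwise at distance $\ge R_i/2$, only $O(1/\eps^d)$ of them fit. Summing over epochs yields $O(\eps^{-d}\log\Spread)$, which also bounds the total query time.

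I expect no genuine obstacle here: the theorem is a repackaging of \lemref{walk}, \clmref{early_stop}, and the epoch lemma. The two things to keep straight are (i) that the $\eps \to \eps/4$ recalibration is applied uniformly — in the friend-list radius, so that $\cardin{F_i}$ stays $O(1/\eps^d)$, and in the $1-\eps/16$ move threshold — while the early-stop test and the final $(1+\eps)$ guarantee are stated in the user's $\eps$, a discrepancy \clmref{early_stop} already reconciles; and (ii) that the early-stop rule does not break the forward-scanning invariant underlying \lemref{walk}, since it only ever makes the walk halt sooner and never redirects it, which is what legitimizes the ``run the algorithm on $\P_i$'' step inside \clmref{early_stop}'s proof. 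Both are routine to check.
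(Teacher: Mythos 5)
Your proposal is correct and matches the paper's (largely implicit) proof: the theorem is exactly the packaging of \lemref{walk}, \clmref{early_stop}, and the epoch lemma, with the graph rebuilt at internal parameter $\eps/4$ so that the early-stop test certifies a $(1+\eps)$-\ANN in the user's $\eps$. Your explicit bookkeeping (friend-list radius $32 r_{i-1}/\eps$, move threshold $1-\eps/16$, and the observation that early stopping never redirects the walk) is precisely what the paper intends but leaves unstated.
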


\begin{remark}
    The result of \thmref{a_n_n_main} holds if $\P \subseteq \Ground$ is a set of $n$ points in a metric space $\FMS = (\Ground, \DistChar )$ of bounded doubling dimension $\delta$. The term $d$ is then replaced by $O(\delta)$. Thus, the space of the construction is $n / \eps^{O(\delta)}$, and the query time is  $ \eps^{-O(\delta)} \log \spread$.
\end{remark}

\BibLatexMode{\printbibliography}

\appendix

\section{Apollonius circle}

\begin{lemma}
    \lemlab{Apollonius}%
    Let $u_1,u_2$ be two points in $\Re^d$, and consider the set $U$ of all points $p \in \Re^d$, such that $w_1 \dY{u_1}{p} \geq w_2 \dY{u_2}{p}$, where $w_1, w_2$ are two specified weights. For $\xi = \dY{u_1}{u_2}$, the set $U$ is the Apollonius ball centered at
    \begin{equation*}
        u_2 +
        \frac{1}{\kappa^2 - 1}  (u_2-u_1),
    \end{equation*}
    and of radius
    \begin{math}
        \frac{\kappa}{\kappa^2 - 1} \xi.
    \end{math}
\end{lemma}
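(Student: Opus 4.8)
The plan is to carry out a direct algebraic computation. Set up coordinates with $u_1$ and $u_2$ as fixed points in $\Re^d$, and let $p = (x_1,\ldots,x_d)$ be a variable point. The defining inequality $w_1 \dY{u_1}{p} \geq w_2 \dY{u_2}{p}$ is equivalent (since both sides are nonnegative) to $w_1^2 \normX{p - u_1}^2 \geq w_2^2 \normX{p - u_2}^2$. Introduce $\kappa = w_1/w_2$ (the relevant parameter, as in the statement) so this becomes $\kappa^2 \normX{p-u_1}^2 \geq \normX{p-u_2}^2$. Expanding both squared norms and moving everything to one side gives $(\kappa^2 - 1)\normX{p}^2 - 2\langle p, \kappa^2 u_1 - u_2\rangle + (\kappa^2 \normX{u_1}^2 - \normX{u_2}^2) \geq 0$. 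Assuming $\kappa > 1$, divide through by $\kappa^2 - 1$ and complete the square in $p$; the locus $U$ is then $\Set{p}{\normX{p - c}^2 \leq r^2}$ for an explicit center $c$ and radius $r$.

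The second step is to simplify $c$ and $r$ into the claimed closed forms. The center comes out as $c = \frac{\kappa^2 u_1 - u_2}{\kappa^2 - 1}$; I would then rewrite this as $u_2 + \frac{\kappa^2 u_1 - u_2 - (\kappa^2 - 1)u_2}{\kappa^2 - 1} = u_2 + \frac{\kappa^2(u_1 - u_2)}{\kappa^2-1}$, which matches $u_2 + \frac{1}{\kappa^2-1}(u_2 - u_1)$ up to the sign convention one fixes for $\kappa$ versus $1/\kappa$ and for which point is the ``center'' (here the ball contains points closer to $u_2$, so it is centered near $u_2$ — one should double-check the direction of the inequality against the intended reading). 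For the radius, after completing the square one gets $r^2 = \normX{c}^2 - \frac{\kappa^2\normX{u_1}^2 - \normX{u_2}^2}{\kappa^2-1}$; substituting $c$ and collecting terms, everything reduces to a multiple of $\normX{u_1 - u_2}^2 = \xi^2$, and the bookkeeping yields $r = \frac{\kappa}{\kappa^2-1}\xi$. This is pure algebra: expand $\normX{\kappa^2 u_1 - u_2}^2 = \kappa^4\normX{u_1}^2 - 2\kappa^2\langle u_1,u_2\rangle + \normX{u_2}^2$, divide by $(\kappa^2-1)^2$, subtract the other term over a common denominator, and watch the $\kappa^4$ and $\kappa^2$ cross-terms cancel against $\langle u_1, u_2\rangle$ to leave $\frac{\kappa^2}{(\kappa^2-1)^2}(\normX{u_1}^2 - 2\langle u_1,u_2\rangle + \normX{u_2}^2)$.

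I do not expect a genuine obstacle here — the result is classical (the Apollonius circle / ball of two points with a weight ratio) and the argument is a one-paragraph completion-of-the-square computation. The only things requiring care are: (i) the hypothesis $\kappa \neq 1$ is needed for the division, and one should note that $\kappa = 1$ degenerates to a halfspace (the perpendicular bisector), which is presumably outside the intended regime since the statement asserts a ball; and (ii) being consistent about which weight is ``$w_1$'' and which point the ball is centered near, so that the displayed formulas in the statement come out with the right signs. If one prefers a coordinate-free write-up, the same steps go through verbatim using inner products in place of coordinates, which I would in fact do to keep it clean.
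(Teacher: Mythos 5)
Your method -- square the defining inequality and complete the square -- is the same as the paper's; the only cosmetic difference is that the paper first applies a rigid motion so that $u_1=(0,0)$ and $u_2=(\xi,0)$ and works in coordinates, while you stay coordinate-free. Your radius bookkeeping is also correct: $\normX{c}^2 - \frac{\kappa^2\normX{u_1}^2-\normX{u_2}^2}{\kappa^2-1} = \frac{\kappa^2}{(\kappa^2-1)^2}\normX{u_1-u_2}^2$, giving $r=\frac{\kappa}{\lvert\kappa^2-1\rvert}\,\xi$.

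The one genuine problem is the convention issue you flag in (ii) but never resolve, and as written it puts your derivation in the wrong regime. You set $\kappa=w_1/w_2$ and assume $\kappa>1$. In that regime $U=\Set{p}{\kappa^2\normX{p-u_1}^2\geq\normX{p-u_2}^2}$ contains $u_2$ and every sufficiently far point but not $u_1$, so it is the \emph{complement} of a ball around $u_1$, not a ball; correspondingly, dividing by $\kappa^2-1>0$ preserves the direction of the inequality and yields $\normX{p-c}^2\geq r^2$, not the ``$\leq$'' you wrote. The regime in which the lemma's conclusion holds is $w_2>w_1$, and the paper's proof accordingly sets $\kappa=w_2/w_1$ (the reciprocal of yours; the statement never defines $\kappa$, but this is the convention under which its formulas are correct -- in the application $\kappa=\alpha>1$). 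With your $\kappa<1$ the division by $\kappa^2-1<0$ flips the inequality, $U$ becomes the ball $\Set{p}{\normX{p-c}\leq r}$ with $c=u_2+\frac{\kappa^2}{\kappa^2-1}(u_1-u_2)$, and substituting $\kappa\mapsto 1/\kappa$ turns this into $u_2+\frac{1}{\kappa^2-1}(u_2-u_1)$ while leaving the radius $\frac{\kappa}{\lvert\kappa^2-1\rvert}\xi$ unchanged, matching the statement exactly. So the fix is a one-line change of convention, but ``up to the sign convention'' is precisely the point where the statement demands a choice: under the convention and assumption you actually adopted, the claimed conclusion (that $U$ is a ball, centered near $u_2$) is false.
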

\begin{proof}
    By rotating and translating space, we can assume that  $u_1=(0,0)$ and $u_2=(\xi,0)$ be two points, with weights $w_1$ and $w_2$, respectively. The Apollonius circle they define is
    \begin{equation*}
        w_1 \dY{u_1}{(x,y)} = w_2\dY{u_2}{(x,y)}.
    \end{equation*}
    Setting $\kappa = w_2/w_1$, and squaring, we have
    \begin{align*}
      &x^2 +  y^2
        =
        \kappa^2 \pth{ (x - \xi )^2 + y^2}
      \\%
      \iff\quad
      &
        0=
        (\kappa^2-1)\pth{ x^2 + y^2}
        +
        \kappa^2 \pth{ -2 \xi x + \xi^2
        }
      \\
      \iff\qquad
      &
        0=
        x^2 -2 \frac{\kappa^2}{\kappa^2 - 1} \xi x
        + y^2
        +
        \frac{ \kappa^2   }{\kappa^2-1}    \xi^2
      \\
      \iff\qquad
      &
        \pth{x - \frac{\kappa^2}{\kappa^2 - 1} \xi}^2
        + y^2
        =
        \pth{\frac{\kappa^2}{\kappa^2 - 1} \xi}^2
        -\frac{ \kappa^2   }{\kappa^2-1}    \xi^2
        =
        \pth{\frac{\kappa}{\kappa^2 - 1} \xi}^2,
    \end{align*}
    since
    \begin{math}
        {\frac{\kappa^2}{\kappa^2 - 1} \xi^2}\pth{
           \frac{\kappa^2}{\kappa^2 - 1}
           - 1 }
        =
        {\frac{\kappa^2}{\kappa^2 - 1} \xi^2}\pth{
           \frac{1}{\kappa^2 - 1}
        }
        =
        \frac{\kappa^2}{(\kappa^2 - 1)^2} \xi^2.
    \end{math}
    Namely, the disk has a center at
    \begin{equation*}
        \pth{\frac{\kappa^2}{\kappa^2 - 1} \xi, 0}
        =%
        u_1 + \frac{\kappa^2}{\kappa^2 - 1}  (u_2-u_1)
        =%
        u_1 + \pth{ 1 + \frac{1}{\kappa^2 - 1} } (u_2-u_1)
        =
        u_2 +
        \frac{1}{\kappa^2 - 1}  (u_2-u_1).
    \end{equation*}
    and its radius is
    \begin{math}
        r =
        \frac{\kappa}{\kappa^2 - 1} \xi.
    \end{math}
\end{proof}

\end{document}